\definecolor{darkred}  {rgb}{0.5,0,0}
\definecolor{darkblue} {rgb}{0,0,0.5}
\definecolor{darkgreen}{rgb}{0,0.5,0}
\title{Easy and hard functions for the Boolean hidden shift problem}
\author[1]{Andrew M.~Childs}
\author[2]{Robin Kothari}
\author[3(1,4)]{Maris Ozols}
\author[4]{Martin Roetteler}
\affil[1]{Department of Combinatorics \& Optimization and\\
Institute for Quantum Computing, University of Waterloo\\
200 University Avenue West,
Waterloo, ON, N2L 3G1, Canada\\
\texttt{amchilds@uwaterloo.ca}}
\affil[2]{David R.\ Cheriton School of Computer Science and\\
Institute for Quantum Computing, University of Waterloo\\
200 University Avenue West,
Waterloo, ON, N2L 3G1, Canada\\
\texttt{rkothari@uwaterloo.ca}}
\affil[3]{IBM TJ Watson Research Center\\
1101 Kitchawan Road, Yorktown Heights, NY 10598, USA\\
\texttt{marozols@yahoo.com}}
\affil[4]{NEC Laboratories America\\
4 Independence Way, Suite 200, Princeton, NJ 08540, USA\\
\texttt{mroetteler@nec-labs.com}}
\authorrunning{A.~M.~Childs, R.~Kothari, M.~Ozols, and M.~Roetteler}
\theoremstyle{definition}
\newtheorem{problem}{Problem}
\newtheorem{proposition}[theorem]{Proposition}
\newtheorem*{fact*}{Fact}
\newtheorem*{example*}{Example}
\newcommand{\ket}[1]{|#1\rangle}
\newcommand{\bra}[1]{\langle#1|}
\newcommand{\braket}[2]{\langle#1|#2\rangle}
\newcommand{\proj}[1]{|#1\rangle\langle#1|}
\newcommand{\x}{\otimes}
\newcommand{\xp}[1]{^{\otimes #1}}
\newcommand{\op}{\oplus}
\newcommand{\tp}{^{\mathsf{T}}}
\newcommand{\vc}[1]{\boldsymbol{#1}} 
\newcommand{\bv}[1]{#1}
\DeclarePairedDelimiter{\set}{\lbrace}{\rbrace}
\DeclarePairedDelimiter{\abs}{\lvert}{\rvert}
\DeclarePairedDelimiter{\norm}{\lVert}{\rVert}
\newcommand{\C}{\mathbb{C}}
\newcommand{\R}{\mathbb{R}}
\newcommand{\Z}{\mathbb{Z}}
\DeclareMathOperator{\NOT}{NOT}
\DeclareMathOperator{\AND}{AND}
\DeclareMathOperator{\OR}{OR}
\newcommand{\Thm}[1]{\hyperref[thm:#1]{Theorem~\ref*{thm:#1}}}
\newcommand{\Lem}[1]{\hyperref[lem:#1]{Lemma~\ref*{lem:#1}}}
\newcommand{\Cor}[1]{\hyperref[cor:#1]{Corollary~\ref*{cor:#1}}}
\newcommand{\Def}[1]{\hyperref[def:#1]{Definition~\ref*{def:#1}}}
\newcommand{\Prop}[1]{\hyperref[prop:#1]{Prop.~\ref*{prop:#1}}}
\newcommand{\Prob}[1]{\hyperref[prob:#1]{Problem~\ref*{prob:#1}}}
\newcommand{\Sect}[1]{\hyperref[sect:#1]{Sect.~\ref*{sect:#1}}}
\newcommand{\Apx}[1]{\hyperref[apx:#1]{Appendix~\ref*{apx:#1}}}
\newcommand{\Fig}[1]{\hyperref[fig:#1]{Fig.~\ref*{fig:#1}}}
\newcommand{\Tab}[1]{\hyperref[tab:#1]{Table~\ref*{tab:#1}}}
\newcommand{\EqRef}[1]{\hyperref[eq:#1]{(\ref*{eq:#1})}}
\newcommand{\Eq}[1]{Eq.~\hyperref[eq:#1]{(\ref*{eq:#1})}}
\newcommand{\step}[1]{\hyperref[step:#1]{step~\ref*{step:#1}}}
\newcommand{\alg}[1]{\textnormal{\textbf{#1}}}
\newcommand{\PGMtxt}{\alg{PGM}}
\newcommand{\PGM}{\hyperref[alg:PGM]{\PGMtxt}}
\newcommand{\BHSP}{\textsc{BHSP}}
\newcommand{\quantump}{\pi}
\newcommand{\quantums}{\sigma}
\newcommand{\ve}{\vc{\varepsilon}}
\newcommand{\waterfilling}{\ve_{\vc{\quantump}\to\vc{\quantums}}^p}
\newcommand{\mc}[1]{\mathcal{#1}}
\newcommand{\E}{\mathbb{E}}
\newcommand{\defeq}{\colonequals}
\newcommand{\I}[1]{\mathrm{I}_{#1}}
\newenvironment{algobox}[1]
{\begin{center}\begin{minipage}{#1\textwidth}\hrulefill\\}
{\vspace{-8pt}\hrulefill\end{minipage}\end{center}}
\newcommand{\FF}{\mc{F}}           %
\newcommand{\FS}[2]{\FF^{#1}_{#2}} 
\newcommand{\FC}[1]{\FF^{#1}}      
\begin{document}

\maketitle

\begin{abstract}
We study the quantum query complexity of the Boolean hidden shift problem.  Given oracle access to $f(x+s)$ for a known Boolean function $f$, the task is to determine the $n$-bit string $s$. The quantum query complexity of this problem depends strongly on $f$. We demonstrate that the easiest instances of this problem correspond to bent functions, in the sense that an exact one-query algorithm exists if and only if the function is bent. We partially characterize the hardest instances, which include delta functions. Moreover, we show that the problem is easy for random functions, since two queries suffice. Our algorithm for random functions is based on performing the pretty good measurement on several copies of a certain state; its analysis relies on the Fourier transform. We also use this approach to improve the quantum rejection sampling approach to the Boolean hidden shift problem.
\end{abstract}

\section{Introduction} \label{sect:Intro}

Many computational problems for which quantum algorithms can achieve superpolynomial speedup over the best known classical algorithms are related to the \emph{hidden subgroup problem} (see for example \cite{CvD10}).

\begin{problem}[Hidden subgroup problem]\label{prob:Subgroup}
For any finite group $G$, say that a function $f\colon G \to X$ \emph{hides} a subgroup $H$ of $G$ if it is constant on cosets of $H$ in $G$ and distinct on different cosets. Given oracle access to such an $f$, find a generating set for $H$.
\end{problem}

Two early examples of algorithms for hidden subgroup problems are the Deutsch--Jozsa algorithm~\cite{DeutschJozsa} and Simon's algorithm~\cite{Simon}. Inspired by the latter, Shor discovered efficient quantum algorithms for factoring integers and computing discrete logarithms~\cite{Factoring}. Kitaev subsequently introduced the Abelian stabilizer problem and derived an efficient quantum algorithm for it that includes Shor's factoring and discrete logarithm algorithms as special cases~\cite{Kitaev95}. Eventually it was observed that all of the above algorithms solve special instances of the hidden subgroup problem~\cite{Jozsa98, MoscaEkert, Jozsa01}.

This early success created significant interest in studying various instances of the hidden subgroup problem and led to discovery of many other quantum algorithms. For example, period finding over the reals was used by Hallgren to construct an efficient quantum algorithm for solving Pell's equation~\cite{Hallgren}. Moreover, the hidden subgroup problem over symmetric and dihedral groups are related to the graph isomorphism problem~\cite{BonehLipton, Beals, Hoyer, EttingerHoyer} and certain lattice problems~\cite{Regev:2004}, respectively. The possibility of efficient quantum algorithms for these problems remains a major open question. Kuperberg has provided a subexponential-time quantum algorithm for the dihedral subgroup problem~\cite{Kuperberg, Regev, Kuperberg2}, which has been used to construct elliptic curve isogenies in quantum subexponential time~\cite{CJS10}.

The \emph{hidden shift problem} (also known as the \emph{hidden translation problem}) is a natural variant of the hidden subgroup problem.

\begin{problem}[Hidden shift problem]\label{prob:Shift}
Let $G$ be a finite group. Given oracle access to functions $f_0, f_1\colon G \to X$ with the promise that $f_0(x) = f_1(x \cdot s)$ for some $s \in G$, determine $s$.
\end{problem}

If $G$ is Abelian and $f_0$ is injective, this problem is equivalent to the hidden subgroup problem in the semidirect product group $G \rtimes \Z_2$, where the group operation is defined by
$(x_1, b_1) \cdot (x_2, b_2) \defeq \bigl( x_1 \cdot x_2^{(-1)^{b_1}}, b_1 + b_2 \bigr)$ and the hiding function $f\colon G \rtimes \Z_2 \to X$ is defined as $f[(x,b)] \defeq f_b(x)$. One can check that $f$ is constant on cosets of $H \defeq \langle(s,1)\rangle$ and that injectivity of $f_0$ implies that $f$ is distinct on different cosets. Thus, $f$ hides the subgroup $H$ in $G \rtimes \Z_2$.

Notice that if $G = \Z_d$ then $G \rtimes \Z_2$ is the dihedral group. Ettinger and H{\o}yer~\cite{EH00} showed that the dihedral hidden subgroup problem reduces to the special case of a subgroup $\langle(s,1)\rangle$. Thus the hidden shift problem in $\Z_d$ (with $f_0$ injective) is equivalent to the dihedral hidden subgroup problem, motivating further study of the hidden shift problem for various groups~\cite{vDHI:2003, Friedl2003, MRRS, CW07, CvD07, Ivanyos:2008}.

While the case where $f_0$ is injective is simply related to the hidden subgroup problem, one can also consider the hidden shift problem without this promise.  For example, van~Dam, Hallgren, and Ip~\cite{vDHI:2003} gave an efficient quantum algorithm to solve the shifted Legendre symbol problem, a non-injective hidden shift problem.  Their result breaks a proposed pseudorandom function~\cite{Damgard88}, showing the potential for cryptographic applications of hidden shift problems.  Work on hidden shift problems can also inspire new algorithmic techniques, such as quantum rejection sampling \cite{QRS}.  Moreover, negative results could have applications to designing classical cryptosystems that are secure against quantum attacks~\cite{Regev:2004}.

For the rest of the paper we restrict our attention to the \emph{Boolean hidden shift problem}, in which the hiding function has the form $f_0\colon \Z_2^n \to \Z_2$ for some integer $n \geq 1$.  For this problem (with $n>1$), $f_0$ is necessarily non-injective. This problem has previously been studied in~\cite{Roetteler:2009, Roetteler:2010, GRR11, QRS, Gharibi}.

Notice that to determine the hidden shift of an injective function $f_0$, it suffices to find $x_0$ and $x_1$ such that $f_0(x_0) = f_1(x_1)$. However, this does not hold in the non-injective case, so it is nontrivial to verify a candidate hidden shift (see \cite[Appendix~B]{QRS}). In fact, sometimes the hidden shift cannot be uniquely determined in principle (see \Sect{b-shifts}). On the other hand, by considering functions with codomain $\Z_2$, we have more structure than in the hidden subgroup problem or the injective hidden shift problem, where the codomain is arbitrary. We exploit this structure by encoding the values of the function as phases and using the Fourier transform.

More precisely, the main problem studied in this paper, sometimes denoted $\BHSP_f$, is as follows.

\begin{problem}[Boolean hidden shift problem]\label{prob:ShiftZd}
Given a complete description of a function $f\colon \Z_2^n \to \Z_2$ and access to an oracle for the \emph{shifted function} $f_s(x) \defeq f(x + s)$, 
determine the \emph{hidden shift} $s \in \Z_2^n$.
\end{problem}

Note that in degenerate cases, when the oracle does not contain enough information to completely recover the hidden shift, no algorithm can succeed with certainty.

Let us highlight the main differences between the above problem and other types of hidden shift problem. In the Boolean hidden shift problem,
\begin{itemize}
  \item the function $f$ is \emph{not} injective, and
  \item we are given a \emph{complete description} of the unshifted function $f$ instead of having only oracle access to $f$.
\end{itemize}
Moreover, we are interested only in the \emph{query complexity} of the problem and do not consider its time complexity. This means that we can pre-process the description of $f$ (which may be exponentially large) at no cost before we start querying the oracle.

This problem has been considered previously, e.g., by~\cite{QRS}. Note that some prior work does not give complete description of $f$ but only oracle access to it \cite{Roetteler:2009, Roetteler:2010, GRR11, Gharibi} (and in some cases~\cite{Roetteler:2010} also gives oracle access also to $\tilde{f}$, the dual bent function of $f$).

To address this problem on a quantum computer, we use an oracle that computes the shifted function in the phase.  Such an oracle can be implemented using only one query to an oracle that computes the function in a register.

\begin{definition}\label{def:Oracle}
The quantum \emph{phase oracle} is $O_{f_{s}}\colon \ket{x} \mapsto (-1)^{f(x+s)} \ket{x}$.
\end{definition}

More generally, one can use a controlled phase oracle $\bar O_{f_{s}}\colon \ket{b,x} \mapsto (-1)^{bf(x+s)} \ket{b,x}$ for $b \in \{0,1\}$, which is equivalent to an oracle that computes the function in the first register up to a Hadamard transform.  Some of our algorithms do not make use of this freedom, although our lower bounds always take it into account.

Ultimately, we would like to characterize the classical and quantum query complexities of the hidden shift problem for any Boolean function (or more generally, for any function $f\colon \Z_d^n \to \Z_d$). While we do not resolve this question completely, we make progress by providing a new quantum query algorithm (see \Sect{PGM}) and improving an existing one (see \Sect{QRS with t queries}). However, it remains an open problem to better understand both the classical and quantum query complexities of the $\BHSP$ for general functions.

While general functions are difficult to handle, the quantum query complexity of the hidden shift problem is known for two extreme classes of Boolean functions:
\begin{itemize}
  \item If $f$ is a \emph{bent function}, i.e., it has a ``flat'' Fourier spectrum (see \Sect{Easy = bent}), then one quantum query suffices to solve the problem exactly~\cite{Roetteler:2010}.
  \item If $f$ is a \emph{delta function}, i.e., $f(x) \defeq \delta_{x,x_0}$ for some $x_0 \in \Z_2^n$, then the hidden shift problem for $f$ is equivalent to unstructured search---finding $x_0 + s$ among the $2^n$ elements of $\Z_2^n$---so the quantum query complexity is $\Theta(\sqrt{2^n})$~\cite{Grover96, bbbv97}.
\end{itemize}
Intuitively, other Boolean functions should lie somewhere between these two extreme cases.  In this paper, we give formal evidence for this: we show that the problem can be solved exactly with one query only if $f$ is bent, and we show that it can be solved for any function with $O(\sqrt{2^n})$ queries, with a lower bound of $\Omega(\sqrt{2^n})$ only if the truth table of $f$ has Hamming weight $\Theta(1)$ or $\Theta(2^n)$. This is similar to the weighing matrix problem considered by van~Dam~\cite{WeighingMatrices}, which also interpolates between two extreme cases: the Bernstein-Vazirani problem~\cite{BV:97} and Grover search~\cite{Grover96}.

Aside from delta and bent functions, the Boolean hidden shift problem has previously been considered for several other families of functions. Boolean functions that are quadratic forms or are close to being quadratic are studied in~\cite{Roetteler:2009}. Random Boolean functions have been considered in~\cite{GRR11, Gharibi}. Finally, \cite{QRS} uses quantum rejection sampling to solve the $\BHSP$ for any function,
although its performance in general is not well understood.

Apart from algorithms designed specifically for the $\BHSP$, there are generic classical and quantum algorithms for the $\BHSP$ derived from learning theory. In particular, the $\BHSP$ can be viewed as an instantiation of the problem of exact learning through membership queries. The resulting algorithms are optimal for classical and quantum query complexity up to polynomial factors in $n$. More precisely, for any learning problem, Servedio and Gortler define a combinatorial parameter $\gamma$ \cite{SG04}. For the problem $\BHSP_f$, we denote the parameter as $\gamma_f$. From their results it follows that the classical query complexity of $\BHSP_f$ is lower bounded by $\Omega(n)$ and $\Omega(1/\gamma_f)$ and upper bounded by $O(n/\gamma_f)$. For quantum algorithms, they show a lower bound of $\Omega(1/\sqrt{\gamma_f})$. At{\i}c{\i} and Servedio \cite{AS05} later showed an upper bound of $O(n\log n /\sqrt{\gamma_f})$ queries.

The rest of this paper is organized as follows. In \Sect{QFT and convolution} we briefly review some basic Fourier analysis to establish notation. Next, in \Sect{Extremes} we explore the extreme cases of the $\BHSP$. In \Sect{PGM} we introduce a new approach to the $\BHSP$ based on the pretty good measurement. We analyze its performance for delta, bent, and random Boolean functions in \Sect{PGM performance}. In \Sect{QRS with t queries} we propose an alternative method for boosting the success probability of the quantum rejection sampling algorithm from~\cite{QRS}. Finally, \Sect{Conclusions} 
presents conclusions and open questions.

This paper has several appendices. In \Apx{Bent} we show that the easy instances of the $\BHSP$ correspond to bent functions. In \Apx{Random1}, we show that with one quantum query we can succeed on a constant fraction of all functions, whereas in \Apx{Random functions} we prove that two quantum queries suffice to solve the $\BHSP$ for random functions. Finally, in \Apx{Zeroes} we analyze the structure of zero Fourier coefficients of Boolean functions.

\section{Fourier analysis} \label{sect:QFT and convolution}

Our main tool is Fourier analysis of Boolean functions~\cite{DeWolf:2008}. Here we state the basic definitions and properties of the Fourier transform and convolution. Readers who are familiar with the topic might skip this section, except for \Def{Ft}.

\begin{definition}
The \emph{Hadamard gate} is \smash{$H \defeq \frac{1}{\sqrt{2}} \bigl( \begin{smallmatrix*}[r] 1 & 1 \\ 1 & -1 \end{smallmatrix*} \bigr)$}.
\end{definition}

\begin{definition}\label{def:Fourier}
The \emph{Fourier transform} of a function $F\colon \Z_2^n \to \R$ is a function $\hat{F}\colon \Z_2^n \to \R$ defined as $\hat{F}(w) \defeq \bra{w} H\xp{n} \ket{F}$ where $\ket{F} \defeq \sum_{x \in \Z_2^n} F(x) \ket{x}$. Here $\hat{F}(w)$ is called the \emph{Fourier coefficient} of $F$ at $w \in \Z_2^n$. Explicitly, $\hat{F}(w) = \frac{1}{\sqrt{2^n}} \sum_{x \in \Z_2^n} (-1)^{w \cdot x} F(x)$ where $x \cdot y \defeq \sum_{i=1}^n x_i y_i$. The set $\set{\hat{F}(w) \colon w \in \Z_2^n}$ is called the \emph{Fourier spectrum} of $F$.
\end{definition}

To define the Fourier transform of a Boolean function $f\colon \Z_2^n \to \Z_2$, we identify $f$ with a real-valued function $F\colon \Z_2^n \to \R$ in a canonical way: $F(x) \defeq (-1)^{f(x)} / \sqrt{2^n}$. Note that $F$ is normalized: $\sum_{x \in \Z_2^n} \abs{F(x)}^2 = 1$. Now we can abuse \Def{Fourier} as follows:
\begin{definition}\label{def:FourierZd}
The \emph{Fourier transform} of $f\colon \Z_2^n \to \Z_2$ is $\hat{F}(w) = \frac{1}{2^n} \sum_{x \in \Z_2^n} (-1)^{w \cdot x + f(x)}$.
\end{definition}
To avoid confusion, we use lower case letters for $\Z_2$-valued functions and capital letters for $\R$-valued functions.

\begin{definition}\label{def:Convolution}
The \emph{convolution} of functions $F, G\colon \Z_2^n \to \R$ is a function $(F * G)\colon \Z_2^n \to \R$ defined as $(F * G)(x) \defeq \sum_{y \in \Z_2^n} F(y) G(x - y)$. The \emph{$t$-fold convolution} of $F\colon \Z_2^n \to \R$ is a function $F^{*t}\colon \Z_2^n \to \R$ defined as
\begin{equation}
  F^{*t}(w)
  \defeq (\underbrace{F * \dotsb *F}_t)(w)
\, = \!\!\!\!\!\!\!\!
    \sum_{y_1, \dotsc, y_{t-1} \in \Z_2^n} \!\!\!\!\!
    F(y_1) \dotsb F(y_{t-1}) F \bigl( w - (y_1 + \dotsb + y_{t-1}) \bigr).
\end{equation}
\end{definition}

\begin{fact*}
Let $F,G,H\colon \Z_2^n \to \R$ denote arbitrary functions. The Fourier transform and convolution have the following basic properties:
\begin{enumerate}
  \item The Fourier transform is linear: $\widehat{F+G\:} = \hat{F} + \hat{G}$.
  \item The Fourier transform is self-inverse: $\hat{\hat{F}} = F$.
  \item Since $H\xp{n}$ is unitary, the Plancherel identity $\sum_{w \in \Z_2^n} \abs{\hat{F}(w)}^2 = \sum_{x \in \Z_2^n} \abs{F(x)}^2$ holds.
  \item Convolution is commutative ($F * G = G * F$) and associative ($(F * G) * H = F * (G * H)$).
  \item The Fourier transform and convolution are related through the following identities: $(\hat{F} * \hat{G}) / \sqrt{2^n} = \widehat{FG}$ and $(\widehat{F * G}) / \sqrt{2^n} = \hat{F} \hat{G}$, where $FG\colon \Z_2^n \to \C$ is the entry-wise product of functions $F$ and $G$: $(FG)(x) \defeq F(x) G(x)$.
  \item By induction, the $t$-fold convolution satisfies the identity $\bigl[ \hat{F} / \sqrt{2^n} \bigr]^{*t} = \widehat{F^t} / \sqrt{2^n}$.
\end{enumerate}
\end{fact*}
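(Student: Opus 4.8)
The plan is to prove $\bigl[ \hat{F}/\sqrt{2^n} \bigr]^{*t} = \widehat{F^t}/\sqrt{2^n}$ by induction on $t$, with property~5 of the Fact (relating the Fourier transform of an entry-wise product to the convolution of the transforms) doing the real work in the inductive step. The base case $t = 1$ is immediate: the left-hand side equals $\hat{F}/\sqrt{2^n}$ and the right-hand side equals $\widehat{F^1}/\sqrt{2^n} = \hat{F}/\sqrt{2^n}$.

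For the inductive step, assume the identity holds for some $t \geq 1$. First I would split off a single factor of the convolution, using associativity (property~4) together with the definition of the $t$-fold convolution to write
\begin{equation}
  \bigl[ \hat{F}/\sqrt{2^n} \bigr]^{*(t+1)}
  = \bigl[ \hat{F}/\sqrt{2^n} \bigr]^{*t} * \bigl[ \hat{F}/\sqrt{2^n} \bigr].
\end{equation}
Applying the induction hypothesis to the first factor, and then pulling the two scalars out of the convolution by bilinearity, turns the right-hand side into $\bigl( \widehat{F^t} * \hat{F} \bigr)/2^n$.

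Finally I would invoke property~5 in the form $(\hat{G} * \hat{H})/\sqrt{2^n} = \widehat{GH}$ with $G = F^t$ and $H = F$, so that $\bigl( \widehat{F^t} * \hat{F} \bigr)/\sqrt{2^n} = \widehat{F^t F} = \widehat{F^{t+1}}$. Dividing by one further factor of $\sqrt{2^n}$ gives $\bigl( \widehat{F^t} * \hat{F} \bigr)/2^n = \widehat{F^{t+1}}/\sqrt{2^n}$, which is exactly the claim for $t+1$ and closes the induction. The only point demanding care—and it is minor—is the bookkeeping of the normalization: the two explicit factors of $1/\sqrt{2^n}$ merge into $1/2^n$, and half of this is absorbed by the $1/\sqrt{2^n}$ already carried by property~5. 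There is no substantive obstacle; all the content sits in property~5, and the induction merely propagates it across the $t$ convolution factors.
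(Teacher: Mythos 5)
Your proof is correct and is exactly the argument the paper intends: the Fact itself only says ``by induction,'' and the intended induction is precisely yours---peel off one convolution factor, apply the hypothesis, and invoke the identity $(\hat{F}*\hat{G})/\sqrt{2^n}=\widehat{FG}$ from property~5, with the normalization factors merging as you describe. Your careful bookkeeping of the $1/\sqrt{2^n}$ factors fills in the only detail the paper leaves implicit.
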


The following $t$-fold generalization of the Fourier spectrum plays a key role:
\begin{definition}\label{def:Ft}
For $t \geq 1$, the \emph{$t$-fold Fourier coefficient} of $f\colon \Z_2^n \to \Z_2$ at $w \in \Z_2^n$ is $\FC{t}(w) \defeq \sqrt{ \bigl[ \hat{F}^2 \bigr]^{*t} (w) }$. In particular, for $t = 1$ we have $\FC{1}(w) = \abs{\hat{F}(w)}$.
\end{definition}
We can express $\FC{t}(w)$ in many equivalent ways using the identities listed above:
\begin{equation}
  \bigl[ \FC{t}(w) \bigr]^2
= \bigl[ \hat{F}^2 \bigr]^{*t} (w)
= \biggl[ \frac{1}{\sqrt{2^n}} \bigl(\widehat{F*F}\bigr) \biggr]^{*t} \!\!\!\! (w)
= \frac{1}{\sqrt{2^n}} \widehat{\:(F*F)^t\:} (w).
\label{eq:FS norm}
\end{equation}

\section{Characterization of extreme cases} \label{sect:Extremes}

In this section we explore the set of functions for which the quantum query complexity of the $\BHSP$ is extreme. Recall that the $\BHSP$ can be solved with one query for bent functions and with $\Theta(\sqrt{2^n})$ queries for delta functions. Here we prove that $\BHSP_f$ can be solved exactly with one query only if $f$ is bent, and with $O(\sqrt{2^n})$ queries (with bounded error) for any $f$.

\subsection{Easy functions are bent} \label{sect:Easy = bent}

In general, the quantum query complexity of the $\BHSP$ for an arbitrary function is unknown. However, the problem becomes particularly easy for \emph{bent functions}, where a single query suffices to solve the problem exactly~\cite{Roetteler:2010}. In fact, bent functions are the only functions with this property, as we show here.

Bent functions can be characterized in many equivalent ways~\cite{BentFunctions, Dillon72}. The standard definition is that bent functions have a ``flat'' Fourier spectrum:
\begin{definition}\label{def:Bent}
A Boolean function $f\colon \Z_2^n \to \Z_2$ is \emph{bent} if all its Fourier coefficients $\hat{F}(w)$ (see \Def{FourierZd}) have the same absolute value: $\abs{\hat{F}(w)} = 1/\sqrt{2^n}$ for all $w \in \Z_2^n$.
\end{definition}

While many examples of bent functions have been constructed (e.g., see~\cite{MS:77, Dillon:75, Dobbertin:95}), no complete classification is known. As an example, the \emph{inner product} of two $n$-bit strings (modulo two) is a bent function~\cite{Dillon72,MS:77}: $\mathrm{IP}_n(x_1, \dotsc, x_n, y_1, \dotsc, y_n) \defeq \sum_{i=1}^n x_i y_i$.

We make a few simple observations about bent functions. Recall from \Sect{QFT and convolution} that the Fourier spectrum of $f$ is normalized as \smash{$\sum_{w \in \Z_2^n} \abs{\hat{F}(w)}^2 = 1$}, so the spectrum is ``flat'' only when \smash{$\abs{\hat{F}(w)} = 1/\sqrt{2^n}$} for all $w \in \Z_2^n$. Recall from \Def{FourierZd} that $\hat{F}(w)$ is always an integer multiple of $1/2^n$. Thus an $n$-variable function can only be bent if $n$ is even~\cite{Dillon:75, MS:77}. Moreover, from $\abs{\hat{F}(0)} = 1/\sqrt{2^n}$ we get that $\abs{\sum_{w \in \Z_2^n} (-1)^{f(x)}} = \sqrt{2^n}$, so a bent function $f$ is close to being balanced: $\abs{f} = (2^n \pm \sqrt{2^n}) / 2$ where $\abs{f} \defeq \abs{\set{x \in \Z_2^n \colon f(x) = 1}}$ is the \emph{Hamming weight} of $f$.

Our main result regarding bent functions is as follows.

\begin{restatable}{theorem}{BENT}\label{thm:Bent exact}
Let $f\colon \Z_2^n \to \Z_2$ be a Boolean function with $n \geq 2$. A quantum algorithm can solve $\BHSP_f$ exactly with a single query to $O_{f_s}$ if and only if $f$ is bent.
\end{restatable}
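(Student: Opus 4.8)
The plan is to prove the two implications separately. The sufficiency direction (if $f$ is bent, one query suffices) is essentially the algorithm of~\cite{Roetteler:2010}, which I would recall for completeness; the substance of the theorem is the necessity direction (an exact one-query algorithm exists only if $f$ is bent), which I would obtain by reducing \emph{exactness} to a rigidity condition on the autocorrelation of $f$. For sufficiency I would run the bent-function algorithm: prepare $H\xp{n}\ket{0}$, apply the phase oracle to get $\sum_x F_s(x)\ket{x}$ with $F_s(x) = (-1)^{f(x+s)}/\sqrt{2^n}$, and apply $H\xp{n}$ again, so the amplitude on $\ket{w}$ becomes $\hat{F}_s(w) = (-1)^{w\cdot s}\hat{F}(w)$ by the shift property. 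Since $f$ is bent, \Def{Bent} gives $\hat{F}(w) = (-1)^{\tilde f(w)}/\sqrt{2^n}$ for the dual bent function $\tilde f$, which is computable from the complete description of $f$ at no query cost. Applying the diagonal correction $\ket{w}\mapsto(-1)^{\tilde f(w)}\ket{w}$ and then one more $H\xp{n}$ produces $\ket{s}$ exactly, so a single measurement recovers $s$.

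For necessity I would start from the most general one-query algorithm, which without loss of generality prepares a fixed state $\ket{\psi} = \sum_{b,x,j}\alpha_{b,x,j}\ket{b,x,j}$ (control bit $b$, query register $x$, workspace $j$), applies the controlled phase oracle $\bar O_{f_s}$ once, and applies a fixed unitary before measuring. Exact recovery forces the post-oracle states $\ket{\psi_s} = \bar O_{f_s}\ket{\psi}$ to be pairwise orthogonal (unitaries preserve inner products, and perfectly distinguishable pure states must be orthogonal). Writing $p_{b,x} \defeq \sum_j \abs{\alpha_{b,x,j}}^2$ and $P_b \defeq \sum_x p_{b,x}$, a direct computation gives, for $s' = s+d$,
\[
  \braket{\psi_s}{\psi_{s+d}} = P_0 + \sum_{x \in \Z_2^n} p_{1,x}\,(-1)^{f(x+s)+f(x+s+d)} .
\]
The key step is then to sum this identity over all $s \in \Z_2^n$ for a fixed $d \neq 0$: after the substitution $y = x+s$ the inner sum collapses to the autocorrelation $C_f(d) \defeq \sum_x (-1)^{f(x)+f(x+d)}$ \emph{independently of} $x$, yielding $2^n P_0 + P_1\,C_f(d) = 0$. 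Since $P_1 > 0$ (an oracle never invoked on the $b=1$ branch reveals nothing), this shows $C_f(d) = -2^n P_0 / P_1$ is a single constant $c \le 0$ for every $d \neq 0$.

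The main obstacle---and the point where the hypothesis $n \ge 2$ is genuinely used---is to upgrade ``constant autocorrelation'' to $c = 0$, for then $C_f(d) = 0$ for all $d \neq 0$, which is precisely the autocorrelation characterization of bentness; equivalently, taking the Fourier transform of $C_f = 2^n\delta_{0}$ forces $\abs{\hat F(w)} = 1/\sqrt{2^n}$ for all $w$, matching \Def{Bent}. I would close the pinch using two elementary facts about $C_f$: first, each $C_f(d) = 2^n - 2\,\abs{\set{x : f(x)\neq f(x+d)}}$ is an even integer; second, $\sum_{d \in \Z_2^n} C_f(d) = \bigl(\sum_x (-1)^{f(x)}\bigr)^2 \ge 0$, which with $C_f(0) = 2^n$ gives $2^n + (2^n-1)c \ge 0$, i.e.\ $c \ge -2^n/(2^n-1)$. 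For $n \ge 2$ we have $2^n/(2^n-1) < 2$, so $c$ is an even integer in $(-2,0]$, forcing $c = 0$ and hence $f$ bent. Finally I would note that the restriction $n \ge 2$ is necessary rather than cosmetic: for $n = 1$ the balanced function $f(x) = x$ has $c = -2$ yet is not solvable with one query, since its two shifted oracles differ only by a global phase---exactly the degenerate case where the interval $(-2,0]$ fails to isolate $c=0$.
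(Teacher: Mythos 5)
Your main argument is correct, and it is essentially the paper's own proof in different notation: your orthogonality conditions $\braket{\psi_s}{\psi_{s'}}=0$ are the paper's linear system $Mp = -p_{\varnothing}u$ (your $b=0$ branch plays the role of the non-query symbol $\varnothing$), your summation over $s$ for fixed $d$ is the paper's symmetrization to the uniform distribution, and your final pinch coincides with the paper's: writing $c = C_f(d) = 2^n (F*F)(d) = -k$, your inequality $\sum_{d} C_f(d) = \bigl(\sum_x (-1)^{f(x)}\bigr)^2 \geq 0$ is exactly the paper's equation $\bigl(2^n - 2\abs{f}\bigr)^2 = 2^n - (2^n-1)k$ ruling out $k \geq 2$, and your evenness of $C_f(d)$ is the paper's parity exclusion of $k=1$. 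Your sufficiency direction (the dual-bent-function algorithm of R{\"o}tteler) is a legitimate alternative to the paper's route, which instead verifies that $p_{\varnothing}=0$ with the uniform distribution solves the same linear system via $(M\mu)_{ss'} = (F*F)(s+s')$.

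One genuine error, though it sits in an aside rather than in the proof chain: your closing claim that for $n=1$ the function $f(x)=x$ ``is not solvable with one query'' is wrong in the very model your necessity argument uses. The two \emph{controlled} oracles $\bar O_{f_0}, \bar O_{f_1}$ do not differ by a global phase (only the plain phase oracles $O_{f_s}$ do), and the problem is exactly solvable with one query: your own equation $2^n P_0 + P_1 c = 0$ admits the solution $P_0 = P_1 = 1/2$ with $c = -2$, and indeed a classical algorithm querying $f_s(0)=s$ succeeds deterministically. The paper makes precisely this observation after its proof: for $n=1$ and $\abs{f}=1$ there is a solution with $k=2$, realized by $p_{\varnothing} = 1/2$, $p_0 = p_1 = 1/4$. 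So the hypothesis $n \geq 2$ is needed because at $n=1$ a non-bent function genuinely \emph{is} one-query solvable (making the ``only if'' direction false there), not because the interval argument fails to certify an unsolvable instance.
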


The proof is based on a characterization of an exact one-query quantum algorithm using a system of linear equations.  This system can be analyzed in terms of the autocorrelation of $f$, which in turn characterizes whether $f$ is bent.  The proof appears in \Apx{Bent}.

\subsection{Hard functions} \label{sect:Hard = peaked}

In this section we study hard instances of the $\BHSP$. First, we observe that the quantum query complexity of solving $\BHSP_f$ for any function $f$ is $O(\sqrt{2^n})$.

\begin{theorem}
\label{thm:allfunctions}
For any $f\colon \Z_2^n \to \Z_2$, the bounded-error quantum query complexity of $\BHSP_f$ is  $O(\sqrt{2^n})$.
\end{theorem}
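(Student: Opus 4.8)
The plan is to recognize $\BHSP_f$ as an instance of the \emph{oracle identification problem} and to invoke the known quadratic speedup for that problem. Concretely, the controlled phase oracle $\bar O_{f_s}$ lets us simulate, with $O(1)$ queries, the standard bit oracle $\ket{x,b}\mapsto\ket{x,b\oplus f(x+s)}$ for the unknown function $f_s\colon\Z_2^n\to\Z_2$. This unknown function is promised to belong to the known family $\set{f_t : t\in\Z_2^n}$, which has at most $2^n$ members. Solving $\BHSP_f$ is exactly the task of deciding which member the oracle computes, i.e.\ identifying an unknown length-$N$ Boolean string (with $N=2^n$, indexed by $x\in\Z_2^n$) that is promised to lie in a known set of size $M\le 2^n=N$.

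For oracle identification with $M\le N$ candidate strings of length $N$, the quantum query complexity is $O(\sqrt N)$, which here gives $O(\sqrt{2^n})$. First I would set up the reduction and confirm the parameters: the domain has size $N=2^n$, the candidate set $\set{f_t}$ has size $M\le 2^n$ (it is strictly smaller exactly when $f$ has nontrivial symmetries, so that some shifts coincide), and a single bit query costs $O(1)$ oracle calls. Then I would apply the oracle identification algorithm, whose upper bound in the regime $M\le N$ is $O(\sqrt N)$. When several shifts produce the same function the answer is ambiguous; the identification procedure returns some $t$ with $f_t$ equal to the oracle's function, which is a valid hidden shift, so this is all we can (and need to) output.

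The main obstacle is that naive ``guess and check'' is far too expensive, so the quadratic speedup cannot come from simply Grover-searching over candidate shifts. Indeed, two distinct shifts $f_t$ and $f_s$ can differ on as few as two inputs (the support of the discrete derivative $f(\cdot+u)\oplus f(\cdot)$ is invariant under $+u$ and hence has even size, so it is never a single point, but it can be exactly two), so distinguishing one wrong candidate from the correct one by local probing may require $\Omega(2^n)$ queries. The point of the oracle identification machinery is precisely to avoid testing candidates one at a time: it interleaves amplitude amplification---used to \emph{locate} an input on which a large fraction of the surviving candidates disagree---with classical pruning of the candidate set, so that the $\sqrt{\,\cdot\,}$ speedup applies to information gathering rather than to a single hypothesis test. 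The step to be careful about is checking that the regime $M\le N$ really yields $O(\sqrt{2^n})$ rather than $O(\sqrt{2^n}\cdot\mathrm{polylog})$; as a sanity check, the extreme case of a delta function has $M=2^n$ distinct shifts and reduces to plain Grover search, which already attains the worst-case bound $\Theta(\sqrt{2^n})$.
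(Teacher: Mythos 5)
Your proposal is correct and matches the paper's proof: the paper likewise proves this theorem by viewing the shifted function as an unknown $N$-bit string ($N = 2^n$) promised to lie among at most $N$ known candidates and invoking the $O(\sqrt{N})$ oracle identification bound of Ambainis et al.\ (stated as \Thm{OIP}). Your additional remarks---simulating the bit oracle with the phase oracle, handling coinciding shifts, and noting why naive Grover over candidates fails---are sound elaborations of the same reduction rather than a different route.
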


If we view $f$ as a $2^n$-bit string indexed by $x \in  \Z_2^n$, this is a special case of the oracle identification problem considered by Ambainis et al.~\cite[Theorem 3]{AIK+04}, who show the following.

\begin{theorem}[Oracle Identification Problem]
\label{thm:OIP}
Given oracle access to an unknown $N$-bit string with the promise that it is one of $N$ known strings, the bounded-error quantum query complexity of identifying the unknown string is $O(\sqrt{N})$.
\end{theorem}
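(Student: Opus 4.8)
The plan is to give an explicit quantum algorithm that identifies the hidden string by repeatedly shrinking the set of candidates consistent with the queries made so far, with Grover search as the workhorse. Write $S_0$ for the given set of $N$ candidate strings and let $x \in S_0$ be the unknown string. Throughout, the algorithm maintains a subset $S \subseteq S_0$ guaranteed to contain $x$; once $|S| = 1$ it outputs the unique surviving candidate. Because the candidate set is known in advance, all processing of $S$ (choosing references, recomputing majorities) is free in the query model, and only the oracle calls are counted.

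First I would isolate a single elimination step that reduces $|S|$ by a constant factor with one search. Given $S$ with $|S| = m$, let $r \in \{0,1\}^N$ be the bitwise majority string of $S$, so that each coordinate $i$ has at least $m/2$ candidates agreeing with $r_i$. Define the predicate $P(i) \defeq [x_i \neq r_i]$, computable with one oracle query, and run Grover search for an $i$ with $P(i) = 1$, restricting the domain to the ``active'' coordinates on which the strings in $S$ still disagree (every marked coordinate is active, since $x \in S$). If such an $i$ is found, query it and replace $S$ by $\{s \in S : s_i = x_i\}$; since at least $m/2$ candidates have $s_i = r_i \neq x_i$ and are therefore removed, $|S|$ at most halves. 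If the search reports that no marked coordinate exists, then $x$ agrees with $r$ everywhere, so $x = r$, and since $x \in S$ we may output $r$. Iterating drives $|S|$ from $N$ to $1$ in $O(\log N)$ rounds; as each round is one Grover search over at most $N$ coordinates at cost $O(\sqrt N)$, this already yields $O(\sqrt N \log N)$ queries, which standard success-probability amplification turns into a bounded-error procedure.

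The main obstacle is shaving the spurious $\log N$ factor down to the claimed $O(\sqrt N)$. The difficulty is that the number of marked coordinates in a round can be as small as one, so naive search pays the full $O(\sqrt N)$ while the candidate set only halves, forcing $\Theta(\log N)$ expensive rounds in the worst case. To beat this I would replace per-round halving by an amortized scheme: use the variant of Grover search that locates a marked coordinate among an unknown number $t$ of them at cost $O(\sqrt{N/t})$, and prove an accounting lemma bounding the total cost $\sum_j \sqrt{N/t_j}$ over all rounds by $O(\sqrt N)$ in the balanced regime where the number of candidates equals the string length $N$. The delicate point is to control the interplay between how many coordinates currently distinguish $x$ from the reference and how fast the candidate set shrinks, so that rounds with few marked coordinates are provably cheap enough in aggregate (and, for general $M$, to organize the run into phases that cut $|S|$ by a polynomial-in-$N$ factor, giving $O(\sqrt N \,\lceil \log M/\log N\rceil)$, which collapses to $O(\sqrt N)$ when $M = N$). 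This amortized analysis is the technical heart of the result, and is exactly what the algorithm of Ambainis et al.~\cite{AIK+04} accomplishes; I would carry out their accounting to close the final gap.
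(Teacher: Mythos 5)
The paper never proves this theorem itself---it is imported directly from Ambainis et al.~\cite{AIK+04}, which is exactly the reference you end up invoking for the amortized accounting, so your proposal ultimately rests on the same source as the paper does. Your self-contained portion (majority-reference elimination via Grover, correctly yielding $O(\sqrt{N}\log N)$ with per-round halving) is sound, and you rightly identify that shaving the $\log N$ factor is the technical heart that \cite{AIK+04} supplies.
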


In the $\BHSP$, we have $N \defeq 2^n$. By \Thm{allfunctions}, the hardest functions are those with query complexity $\Omega(\sqrt{N})$. We know that delta functions have this query complexity, but are there any other functions that are as hard? The delta functions have $\abs{f} = 1$ (recall that $\abs{f}$ denotes the Hamming weight of $f$). Next we show that as $\abs{f}$ increases, the query complexity strictly decreases at first, until $\abs{f}=\Theta(\sqrt{N})$. For example, functions with $\abs{f} = 2$ have strictly smaller query complexity than the delta functions. However, as we approach $\abs{f} = \Omega(N)$, our upper bound is $\Theta(\sqrt{N})$ again. Without loss of generality, we assume that $\abs{f} \leq N/2$; otherwise we can simply negate the function to obtain a function with $\abs{f} \leq N/2$ that has exactly the same query complexity. Formally, we show the following refinement of \Thm{allfunctions}.

\begin{theorem}
\label{thm:allrefined}
For any $f\colon \Z_2^n \to \Z_2$ with $1 \leq \abs{f} \leq N/2$, the bounded-error quantum query complexity of $\BHSP_f$ is at most $\frac{\pi}{4} \sqrt{{N}/{\abs{f}}} + O(\sqrt{\abs{f}})$.
\end{theorem}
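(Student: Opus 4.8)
The plan is to solve $\BHSP_f$ in two stages: first locate a single point in the support of the shifted function, and then pin down the shift among the few candidates that remain. Write $k \defeq \abs{f}$ and let $S \defeq \set{x \in \Z_2^n \colon f(x) = 1}$ be the support of $f$, so that $\abs{S} = k$ and the support of $f_s$ is the translate $S + s$, again of size $k$. Since we are promised $1 \le k \le N/2$ and we know $k$ exactly (we hold a complete description of $f$), the first stage is a direct application of Grover's algorithm to the phase oracle $O_{f_s}$, viewing the $k$ points of $S + s$ as the marked elements among the $N = 2^n$ elements of $\Z_2^n$. This finds some $y \in S + s$ with bounded error using $\frac{\pi}{4}\sqrt{N/k}\,(1 + o(1))$ queries, which accounts for the first term. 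From $y \in S + s$ we learn that $s \in y + S$, narrowing the hidden shift to a known set $C \defeq y + S$ of exactly $k$ candidates.

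The second stage identifies the true shift within $C$, and this is where the bound $O(\sqrt{k})$ must come from. Viewing this naively as an oracle identification problem over strings of length $N$ would only give $O(\sqrt{N})$, so the point is to exploit that we hold a complete description of $f$. For each candidate $c \in C$ the predicted oracle is the \emph{known} function $f_c = f(\cdot + c)$, so before making any queries we can compute a distinguishing set $W \subseteq \Z_2^n$ of positions on which the pairwise distinct candidate functions $\set{f_c \colon c \in C}$ already take pairwise distinct value patterns. A greedy argument shows $\abs{W} \le k - 1$: starting from $W = \emptyset$, each position added to $W$ can be chosen to separate two candidates that still agree, strictly increasing the number of distinct patterns, so at most $k - 1$ positions suffice to separate all $\le k$ candidates. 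Restricting attention to $W$ turns the task into identifying one of $\le k$ known strings of length $\abs{W} \le k - 1$, where each symbol is obtained from a single query to $O_{f_s}$. Applying \Thm{OIP} with its parameter set to (at most) $k$ then solves this in $O(\sqrt{k})$ queries, giving the second term.

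Combining the two stages yields the claimed bound $\frac{\pi}{4}\sqrt{N/\abs{f}} + O(\sqrt{\abs{f}})$, with the two error probabilities boosted to a small constant in the usual way. The main obstacle — and the only place the promise of a complete description of $f$ is essential — is the second stage: the key realization is that the effective oracle length can be collapsed from $2^n$ down to $O(k)$ by precomputing a distinguishing set of positions, which is exactly what separates this $O(\sqrt{k})$ bound from the generic $O(\sqrt{N})$ cost of oracle identification. A secondary point to handle carefully is the degenerate case in which several candidates in $C$ give identical functions $f_c$; there the hidden shift is only determined up to the resulting symmetry, consistent with the caveat that no algorithm can then succeed with certainty, and the reduction above simply identifies the correct function, hence $s$ up to that inherent ambiguity.
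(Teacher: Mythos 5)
Your proposal is correct and takes essentially the same route as the paper: a Grover search (with the Boyer--Brassard--H{\o}yer--Tapp bound $\frac{\pi}{4}\sqrt{N/\abs{f}}$) to find a point where the shifted function is $1$, followed by oracle identification among the $\le \abs{f}$ remaining candidate shifts restricted to a precomputed distinguishing set of at most $\abs{f}-1$ positions, to which \Thm{OIP} is applied. Your greedy argument for the distinguishing set (each added position strictly increases the number of distinct patterns) is just a minor variant of the paper's induction on the number of strings, and your explicit treatment of candidates with identical shifted functions is a harmless refinement of what the paper leaves implicit.
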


\begin{proof}
The algorithm has two parts. First we look for a ``1'' in the bit string contained in the oracle, i.e., an $x$ such that $f(x)=1$. This can be done by a variant of Grover's algorithm that finds a ``1'' in a string of length $N$ using at most $\frac{\pi}{4}\sqrt{N / \abs{f}}$ queries \cite{BoyerBHT98}. Now we have an $x$ such that $f_s(x)=1$ for some unknown $s$. Note that there can be at most $\abs{f}$ shifts $s$ with this property, because each corresponds to a distinct solution to $f(x+s) = 1$ and there are only $\abs{f}$ solutions to this equation.

We are now left with $\abs{f}$ candidates for the black-box function. Viewing this as an oracle identification problem, we have oracle access to an $N$-bit string that could be one of $\abs{f}$ possible candidates. Although the string has length $N$, there are only $\abs{f}$ potential candidates, so intuitively it seems like we should be able restrict the strings to length $\abs{f}$ and apply \Thm{OIP} to obtain the desired result. 

Formally, it can be shown that given $k \geq 2$ distinct Boolean strings of length $N$, there is a subset of indices, $S$, of size at most $k-1$, such that all the strings are distinct when restricted to $S$. We show this by induction. The base case is easy: we can choose any index that differentiates the two distinct strings. Now say we have $m$ distinct strings $y_1, y_2, \ldots, y_m$ and a subset of indices $S$ of size at most $m-1$, such that the $m$ strings are distinct on $S$. We want to add another string $y_{m+1}$ and increase the size of $S$ by at most 1. If $y_{m+1}$ differs with $y_1, y_2, \ldots, y_m$ on $S$, then we do not need to add any more indices to $S$ and we are done. If $y_{m+1}$ agrees with one of $y_1, y_2, \ldots, y_m$ on all of $S$, first note that it can only agree with one such string; to differentiate between these two, we add any index at which they differ to $S$, which must exist since they are distinct.
\end{proof}

This shows that a function can be hard---i.e., can have query complexity $\Theta(\sqrt{N})$---only if $\abs{f}$ is $O(1)$ or $\Theta(N)$.

Note that there do exist hard functions with $\abs{f} = \Theta(N)$. For example, consider the following function: $f(x) = 1$ if the first bit of $x$ is 1 or if $x$ is the all-zero string. This essentially embeds a delta function on the last $n-1$ bits, and thus requires $\Theta(\sqrt{N})$ queries. This function has  $\abs{f}=N/2 + 1$. However, there are also easy functions with $\abs{f} = \Theta(N)$, namely the bent functions.  Thus the Hamming weight does not completely characterize the hardness of the $\BHSP$ at high Hamming weight. However, it precisely characterizes the quantum query complexity at low Hamming weight:

\begin{theorem}
\label{thm:alllowerbound}
For any $f\colon \Z_2^n \to \Z_2$ with no undetectable shifts, the bounded-error quantum query complexity of $\BHSP_f$ is $\Omega(\sqrt{{N}/{\abs{f}}})$.
\end{theorem}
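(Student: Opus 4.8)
The plan is to establish this lower bound using the (positive-weight) quantum adversary method applied to the task of \emph{identifying} the shift $s$. The first step is to use the hypothesis that $f$ has no undetectable shifts: this guarantees that the $N=2^n$ shifted truth tables $y_s \defeq \bigl(f(x+s)\bigr)_{x \in \Z_2^n}$, one for each $s \in \Z_2^n$, are pairwise distinct, so that $\BHSP_f$ is genuinely the problem of identifying which of $N$ distinct $N$-bit strings the oracle holds. Since the target map $s \mapsto y_s$ is then injective, the only constraint on an adversary matrix is that it vanish on the diagonal, which leaves a great deal of freedom.

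Concretely, I would take the $N \times N$ adversary matrix $\Gamma \defeq J - I$, indexed by shifts, where $J$ is the all-ones matrix; its entries are nonnegative, so it is a valid positive-weight adversary matrix. Its spectrum is $\{N-1,-1,\dotsc,-1\}$, so $\norm{\Gamma} = N-1$. For each oracle position $x \in \Z_2^n$, let $D_x$ be the $0/1$ matrix with $(D_x)_{s,s'} = 1$ exactly when $f(x+s) \neq f(x+s')$. The key computation is that $\Gamma \circ D_x$ is exactly the adjacency matrix of the \emph{complete} bipartite graph between $P_x \defeq \set{s : f(x+s)=1}$ and its complement. Because $s \mapsto x+s$ is a bijection of $\Z_2^n$, we have $\abs{P_x} = \abs{f}$ for every $x$, so this symmetric matrix has nonzero eigenvalues $\pm\sqrt{\abs{f}\,(N-\abs{f})}$, whence $\norm{\Gamma \circ D_x} = \sqrt{\abs{f}\,(N-\abs{f})}$ independently of $x$. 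The adversary bound then yields
\begin{equation}
  Q(\BHSP_f)
  = \Omega\!\left( \frac{\norm{\Gamma}}{\max_x \norm{\Gamma \circ D_x}} \right)
  = \Omega\!\left( \frac{N-1}{\sqrt{\abs{f}\,(N-\abs{f})}} \right)
  = \Omega\bigl( \sqrt{N/\abs{f}} \, \bigr),
\end{equation}
where the final step uses $\abs{f} \leq N/2$, hence $N-\abs{f} = \Theta(N)$. This matches the upper bound of \Thm{allrefined} in the regime of small $\abs{f}$; for delta functions ($\abs{f}=1$) it recovers the $\Omega(\sqrt{N})$ search lower bound.

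Finally I would verify that allowing the controlled phase oracle $\bar O_{f_s}$ does not weaken the bound: a query with control bit $b=0$ acts as the identity and so contributes the zero matrix to the adversary computation, while a query with $b=1$ reproduces exactly the $D_x$ above, so that maximizing over all query positions $(b,x)$ leaves $\max_x \norm{\Gamma \circ D_x}$ unchanged. I expect the main obstacle to be the spectral-norm computation for $\Gamma \circ D_x$ together with the bookkeeping needed to justify applying the adversary method to a non-Boolean (identification) output and to the controlled oracle; once the complete-bipartite structure of $\Gamma \circ D_x$ is recognized, the rest is routine.
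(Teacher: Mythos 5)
Your proposal is correct and takes essentially the same approach as the paper: the paper proves this theorem by a one-line appeal to the quantum adversary argument with the adversary matrix taken to be the all-ones matrix with zeroes on the diagonal, which is exactly your $\Gamma = J - I$ (the paper also notes it follows from Theorem~4 of \cite{AIK+04}). Your write-up merely fills in the details the paper leaves implicit---that the absence of undetectable shifts makes the $N$ shifted truth tables distinct, the complete-bipartite structure giving $\norm{\Gamma \circ D_x} = \sqrt{\abs{f}\,(N-\abs{f})}$, and the check that the controlled oracle does not help---all of which are correct.
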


This follows from a simple application of the quantum adversary argument, with the adversary matrix taken to be the all ones matrix with zeroes on the diagonal. It also follows from Theorem 4 of \cite{AIK+04}.

\section{The PGM approach} \label{sect:PGM}

We now present an approach to the Boolean hidden shift problem based on the pretty good measurement (PGM)~\cite{PGM}. In particular, this approach shows that the Boolean hidden shift problem for random functions has small query complexity (see \Sect{Random}).

The main idea of the PGM approach is as follows. We apply the oracle on the uniform superposition and prepare $t$ independent copies of the resulting state (see \Sect{Phi}). Then we use knowledge of the function $f$ to perform the PGM in order to extract the hidden shift $s$ (see \Sect{PGM definition}). A similar strategy was used to efficiently solve the hidden subgroup problem for certain semidirect product groups, including the Heisenberg group~\cite{BCvD05},
and was subsequently applied to a hidden polynomial problem~\cite{Wocjan09}.

\subsection{Performing \texorpdfstring{$t$}{t} queries in parallel} \label{sect:Phi}

In this section we describe a quantum circuit that prepares a state with $w \cdot s$ encoded in the phase, where $s$ is the hidden shift and $w$ is the label of the corresponding standard basis vector. We use this circuit $t$ times in parallel, followed by a sequence of CNOTs, to prepare a certain state $\ket{\Phi^t(s)}$. In the next section we perform a PGM on these states for different values of $s$.

\subsubsection{Circuit}

\begin{figure}
  \centering

\def\h{0.70cm}
\def\w{1.10cm}

\newcommand{\medsize}[1]{\normalsize{#1}}

\begin{tikzpicture}
  [font = \footnotesize,
   dot/.style = {circle, draw = black, fill = black, inner sep = 0mm, minimum size = 1.1mm},
   crc/.style = {circle, draw = black, fill = white, inner sep = 0mm, minimum size = 1.1mm},
   gate/.style = {draw = black, fill = white,    thick, minimum width = 1.00cm, rectangle},
   orac/.style = {draw = black, fill = black!15, thick, minimum width = 0.80cm, rectangle}
  ]


  \def\I{ 0.0*\w}
  \def\W{10.6*\w}

  \newcommand{\wire}[2]{
    \draw (\I,#2*\h) node (#1) {};
    \draw [-, thick] (#1) to (\W,#2*\h);
  }

  \foreach \i/\y in {q1/2, q2/1, q3/-1, q4/-2} {
    \wire{\i}{\y}
  }


  \draw [dashed] ( 8.3*\w,3.0*\h) to ( 8.3*\w,-2.6*\h);
  \draw [<->,thick] (-1.0,2.2*\h) to (-1.0,-2.2*\h);
  \draw (-1.2,0) node {\medsize{$t$}};
  \draw [<->,thick] (0.1*\w,2.8*\h) to (8.2*\w,2.8*\h);
  \draw [<->,thick] (8.4*\w,2.8*\h) to (\W ,2.8*\h);
  \draw (4.2*\w,3.1*\h) node {\medsize{1st stage}};
  \draw (9.5*\w,3.1*\h) node {\medsize{2nd stage}};


  \foreach \i in {q1, q2, q3, q4} {
    \draw (\i) +(-0.95,0) node [right] {$\ket{0}\xp{n}$};
  }


  \newcommand{\gate}[5]{
    \draw (#1) + (#2*\w,0) node (#4) [#3, minimum height = 0.8*\h] {#5};
  }

  \foreach \i in {1,2,3,4} {
    \gate{q\i}{1.0}{gate}{QFT\i}{$H\xp{n}$};
    \gate{q\i}{7.4}{gate}{QFT\i}{$H\xp{n}$};
  }
  \draw (-0.45 ,0.1) node {$\vdots$};
  \draw (1.0*\w,0.1) node {$\vdots$};
  \draw (7.4*\w,0.1) node {$\vdots$};

  \foreach \i in {1,2,3,4} {
    \pgfmathparse{\i+(\i>=3)+1.2};
    \let\j=\pgfmathresult;
    \gate{q\i}{\j}{orac}{O\i}{$O_{f_s}$};
  }

  \foreach \i in {1,2,3,4} {
    \draw (q\i) +(4.2*\w,0) node [fill = white, inner sep = 0.5mm] {$\ldots$};
    \draw (q\i) +(9.7*\w,0) node [fill = white, inner sep = 0.5mm] {$\ldots$};
  }
  \draw (4.2*\w,0.1) node {$\ddots$};
  \draw (9.7*\w,0.1) node {$\ddots$};

  \def\r{0.2*\h}
  \foreach \i in {1,2,3} {
    \pgfmathparse{(\i+(\i>=3))/2+8.2};
    \let\j=\pgfmathresult;
    \draw (q4)  +(\j*\w,0) circle [radius = \r] [thick];
    \draw (q\i) +(\j*\w,0) node (D\i) [dot] {} to (\j*\w,-2*\h-\r) [thick];
  }
\end{tikzpicture}
  \caption[Quantum algorithm for preparing the $t$-fold Fourier sate $\ket{\Phi^t(s)}$]{Quantum algorithm for preparing the $t$-fold Fourier sate $\ket{\Phi^t(s)}$ in \Eq{Phi}. The state on any register at the end of the first stage is given in \Eq{phase-single}.}
  \label{fig:phase}
\end{figure}

The circuit for preparing $\ket{\Phi^t(s)}$ appears in \Fig{phase}. It consists of two stages. The first stage prepares $t$ identical copies of the same state by using one oracle call between two quantum Fourier transforms on each register independently. Recall from \Def{Oracle} that the oracle acts on $n$ qubits and encodes the function in the phase: $O_{f_s}\colon \ket{x} \mapsto (-1)^{f(x+s)} \ket{x}$. The second stage entangles the states by applying a sequence of transversal controlled-NOT gates acting as $\ket{x} \ket{y} \mapsto \ket{x} \ket{y+x}$ for $x,y \in \Z_2^n$.

Note that all unitary post-processing after the oracle queries can be omitted since it does not affect the distinguishability of the states. We include it only to simplify the analysis.

\subsubsection{Analysis} \label{sect:Analysis}

During the first stage of the circuit, the first register evolves under $H\xp{n} \, O_{f_s} H\xp{n}$ (see \Fig{phase}):
\begin{equation}
  \ket{0}\xp{n}
  \mapsto \frac{1}{\sqrt{2^n}} \sum_{x \in \Z_2^n} \ket{x}
  \mapsto \frac{1}{\sqrt{2^n}} \sum_{x \in \Z_2^n} (-1)^{f(x + s)} \ket{x}
  \mapsto \frac{1}{2^n} \sum_{x,y \in \Z_2^n} (-1)^{f(x + s) + x \cdot y} \ket{y}.
\end{equation}
We can rewrite the resulting state as follows:
\begin{equation}
    \sum_{y \in \Z_2^n} (-1)^{s \cdot y}
    \Biggl( \frac{1}{2^n} \sum_{x \in \Z_2^n} (-1)^{f(x) + x \cdot y} \Biggr)
    \ket{y}
  = \sum_{y \in \Z_2^n} (-1)^{s \cdot y} \hat{F}(y) \ket{y}.
  \label{eq:phase-single}
\end{equation}
The overall state after the first stage is just the $t$-fold tensor product of the above state:
\begin{equation}
  \sum_{y_1, \dotsc, y_t \in \Z_2^n}
  (-1)^{s \cdot (y_1 + \dotsb + y_t)}
  \bigotimes_{i=1}^t \hat{F}(y_i) \ket{y_i}.
\end{equation}

In the second stage of the algorithm, the controlled-NOT gates transform this state into
\begin{align}
 & \sum_{y_1, \dotsc, y_t \in \Z_2^n}
   (-1)^{s \cdot (y_1 + \dotsb + y_t)}
   \Biggl[ \bigotimes_{i=1}^{t-1} \hat{F}(y_i) \ket{y_i} \Biggr]
   \hat{F}(y_t) \ket{y_1 + \dotsb + y_t} \\
=& \sum_{y_1, \dotsc, y_t \in \Z_2^n}
   (-1)^{s \cdot y_t}
   \Biggl[ \bigotimes_{i=1}^{t-1} \hat{F}(y_i) \ket{y_i} \Biggr]
   \hat{F} \bigl( y_t - (y_1 + \dotsb + y_{t-1}) \bigr) \ket{y_t}.
\end{align}
We can rewrite this state as
\begin{equation}
  \ket{\Phi^t(s)} \defeq \sum_{w \in \Z_2^n} (-1)^{s \cdot w} \ket{\FS{t}{w}} \ket{w},
  \label{eq:Phi}
\end{equation}
where the non-normalized state $\ket{\FS{t}{w}}$ on $(t-1) n$ qubits is given by
\begin{equation}
  \ket{\FS{t}{w}} \, \defeq \!\!\!\!\!\!\!\!
  \sum_{y_1, \dotsc, y_{t-1} \in \Z_2^n} \!\!\!\!\!\!
  \hat{F}(y_1) \dotsb \hat{F}(y_{t-1}) \hat{F} \bigl( w - (y_1 + \dotsb + y_{t-1}) \bigr)
  \ket{y_1} \dotsb \ket{y_{t-1}}.
  \label{eq:t-Fourier state}
\end{equation}
Its norm is just the $t$-fold Fourier coefficient: $\norm{\ket{\FS{t}{w}}} = \FC{t}(w)$ (see \Def{Ft}).

\subsection{The pretty good measurement} \label{sect:PGM definition}

Let $\set{\rho_s^{(t)} \colon s \in \Z_2^n}$ be a set of mixed states where $\rho_s^{(t)}$ is given with probability $p_s$. The \emph{pretty good measurement} (PGM)~\cite{PGM} for discriminating these states is a POVM with operators $\set{E_s \colon s \in \Z_2^n} \cup \set{E_{*}}$ where
\begin{align}
  E_s    &\defeq E^{-1/2} \, p_s \rho_s^{(t)} \, E^{-1/2}, &
  E      &\defeq \sum_{s \in \Z_2^n} p_s \rho_s^{(t)}, &
  E_{*}  &\defeq I - \sum_{s \in \Z_2^n} E_s.
\end{align}
In our case, $\rho_s^{(t)} \defeq \ket{\Phi^t(s)} \bra{\Phi^t(s)}$ and $p_s \defeq 1/2^n$ where $\ket{\Phi^t(s)}$ is defined in \Eq{Phi}.

To find the operators $E_s$, we compute
\begin{align}
  E &= \sum_{s \in \Z_2^n} \frac{1}{2^n} \sum_{w,w' \in \Z_2^n}
       (-1)^{(w + w') \cdot s} \ket{\FS{t}{w}} \bra{\FS{t}{w'}} \x \ket{w} \bra{w'} \\
    &= \sum_{w \in \Z_2^n} \norm{\ket{\FS{t}{w}}}^2 \cdot
       \frac{\ket{\FS{t}{w}} \bra{\FS{t}{w}}}{\norm{\ket{\FS{t}{w}}}^2} \x \ket{w} \bra{w}.
\end{align}
From now on we use the convention that terms with $\norm{\ket{\FS{t}{w}}} = 0$ are omitted from all sums. As $E$ is a sum of mutually orthogonal rank-$1$ operators with eigenvalues $\norm{\ket{\FS{t}{w}}}^2$, we find
\begin{equation}
  E^{-1/2} = \sum_{w \in \Z_2^n} \frac{1}{\norm{\ket{\FS{t}{w}}}} \cdot
             \frac{\ket{\FS{t}{w}} \bra{\FS{t}{w}}}{\norm{\ket{\FS{t}{w}}}^2} \x \ket{w} \bra{w}.
\end{equation}
Note that $E_s = \proj{E_s}$ where $\ket{E_s} \defeq E^{-1/2} \sqrt{p_s} \ket{\Phi^t(s)}$. We can express $\ket{E_s}$ as follows:
\begin{align}
  \ket{E_s}
 &= \Biggl(
      \sum_{w \in \Z_2^n} \frac{\ket{\FS{t}{w}} \bra{\FS{t}{w}}}{\norm{\ket{\FS{t}{w}}}^3} \x
      \ket{w} \bra{w}
    \Biggr) \frac{1}{\sqrt{2^n}}
    \Biggl(
      \sum_{w \in \Z_2^n} (-1)^{w \cdot s} \ket{\FS{t}{w}} \ket{w}
    \Biggr) \\
 &= \frac{1}{\sqrt{2^n}} \sum_{w \in \Z_2^n} (-1)^{w \cdot s}
    \frac{\ket{\FS{t}{w}}}{\norm{\ket{\FS{t}{w}}}} \x
    \ket{w}. \label{eq:Es}
\end{align}
Notice that the vectors $\ket{E_s}$ are orthonormal, so the PGM is just an orthogonal measurement in this basis (with another outcome corresponding to the orthogonal complement).  Therefore the measurement is unambiguous: if it outputs a value of $s$ (rather than the inconclusive outcome $*$) then it is definitely correct. The corresponding zero-error algorithm can be summarized as follows:

\begin{algobox}{0.65}
\PGMtxt$(f,t)$\label{alg:PGM}
\begin{enumerate}
 \item Prepare $\ket{\Phi^t(s)}$ using the circuit shown in \Fig{phase}.
 \item Recover $s$ by performing an orthogonal measurement with projectors $\set{\proj{E_s} \colon s \in \Z_2^n} \cup \set{E_{*}}$.
\end{enumerate}
\end{algobox}
\vspace{1pt}

\begin{lemma}\label{lem:PGM}
The $t$-query algorithm \PGM$(f,t)$ solves $\BHSP_f$ with success probability
\begin{equation}
  p_f(t) \defeq \Biggl( \frac{1}{\sqrt{2^n}} \sum_{w \in \Z_2^n} \FC{t}(w) \Biggr)^2,
  \label{eq:avg t-fold coefficient}
\end{equation}
where $\FC{t}(w) = \norm{\ket{\FS{t}{w}}}$ denotes the $t$-fold Fourier spectrum of $f\colon \Z_2^n \to \Z_2$ (see \Def{Ft}).
\end{lemma}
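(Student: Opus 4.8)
The plan is to read off the success probability directly from the structure of the PGM that has already been assembled above. Since the excerpt establishes that each POVM element factors as $E_s = \proj{E_s}$ and that the states to be discriminated are the pure states $\rho_s^{(t)} = \proj{\Phi^t(s)}$ with uniform prior $p_s = 1/2^n$, the probability of correctly recovering the hidden shift is
\begin{equation}
  p_f(t) = \sum_{s \in \Z_2^n} p_s \Tr\bigl(E_s \rho_s^{(t)}\bigr)
         = \frac{1}{2^n} \sum_{s \in \Z_2^n} \abs{\braket{E_s}{\Phi^t(s)}}^2,
\end{equation}
where the last equality uses $\Tr(\proj{E_s}\cdot\proj{\Phi^t(s)}) = \abs{\braket{E_s}{\Phi^t(s)}}^2$. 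Thus the whole computation reduces to evaluating the single overlap $\braket{E_s}{\Phi^t(s)}$.

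Next I would substitute the explicit expressions for $\ket{\Phi^t(s)}$ from \EqRef{Phi} and for $\ket{E_s}$ from \EqRef{Es}. The crucial simplification is that both vectors carry the phase $(-1)^{w \cdot s}$ on the $\ket{w}$ register, so in the inner product these combine as $(-1)^{2 w \cdot s} = 1$ and the dependence on $s$ disappears. Orthonormality $\braket{w}{w'} = \delta_{w,w'}$ of the second register collapses the double sum over $w, w'$ to a single sum over $w$, and in each surviving term $\bra{\FS{t}{w}}$ meets $\ket{\FS{t}{w}}$, producing $\norm{\ket{\FS{t}{w}}}^2 / \norm{\ket{\FS{t}{w}}} = \norm{\ket{\FS{t}{w}}}$. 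Using $\norm{\ket{\FS{t}{w}}} = \FC{t}(w)$ from \Def{Ft}, this yields
\begin{equation}
  \braket{E_s}{\Phi^t(s)} = \frac{1}{\sqrt{2^n}} \sum_{w \in \Z_2^n} \FC{t}(w),
\end{equation}
which is manifestly independent of $s$.

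Finally, because the overlap does not depend on $s$, the average over the $2^n$ values of $s$ simply cancels the prefactor $1/2^n$, leaving exactly $p_f(t) = \bigl( \tfrac{1}{\sqrt{2^n}} \sum_{w} \FC{t}(w) \bigr)^2$. The one point to handle with care is the convention, introduced just before the lemma, that terms with $\norm{\ket{\FS{t}{w}}} = 0$ are omitted from all sums; I would note that such $w$ contribute nothing to either $\ket{\Phi^t(s)}$ or $\ket{E_s}$ and hence do not affect the overlap, so the sum is consistently taken over the support. I do not expect a genuine obstacle here: the real work was front-loaded into diagonalizing $E$ and constructing the vectors $\ket{E_s}$, and given those, the success-probability formula follows from a short, essentially mechanical inner-product calculation, with the cancellation of the $s$-dependence being the only step worth emphasizing.
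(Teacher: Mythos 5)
Your proposal is correct and follows essentially the same route as the paper: the paper's proof likewise reduces the claim to the overlap $\abs{\braket{E_s}{\Phi^t(s)}}^2$, observes it equals \EqRef{avg t-fold coefficient}, and uses its independence of $s$ to handle the adversarial choice of shift (which your uniform-prior framing also yields, since the per-$s$ probability equals the average). You merely spell out the inner-product computation that the paper leaves implicit, and your remark about the zero-norm convention matches the paper's closing note.
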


\begin{proof}
Recall that the PGM for discriminating the states $\ket{\Phi^t(s)} = \sum_{w \in \Z_2^n} (-1)^{s \cdot w} \ket{\FS{t}{w}} \ket{w}$ from \Eq{Phi} is an orthogonal measurement on $\ket{E_s}$ (defined in \Eq{Es}) and the orthogonal complement. Thus, given the state $\ket{\Phi^t(s)}$, the success probability to recover the hidden shift $s$ correctly is $\abs[\big]{\braket{E_s}{\Phi^t(s)}}^2$. This is equal to the expression in \Eq{avg t-fold coefficient}. Moreover, it does not depend on $s$, so $p_f(t)$ is the success probability even if $s$ is chosen adversarially as in the definition of $\BHSP_f$ (\Prob{ShiftZd}). Note that the convention of omitting terms with $\norm{\ket{\FS{t}{w}}} = 0$ is consistent since such terms do not appear in \Eq{avg t-fold coefficient}.
\end{proof}

We can use \Eq{FS norm} to write the success probability as
\begin{equation}
  p_f(t)
  = \frac{1}{2^n}
      \Biggl(
        \sum_{w \in \Z_2^n} \sqrt{\frac{1}{\sqrt{2^n}} \widehat{\:(F*F)^t\:}(w)}
      \Biggr)^2.
    \label{eq:success probability}
\end{equation}
Recall from \Sect{QFT and convolution} that $\FC{1}(w) = \abs{\hat{F}(w)}$, so for $t = 1$ we have
\begin{equation}
  p_f(1)
  = \frac{1}{2^n} \Biggl( \sum_{w \in \Z_2^n} \abs{\hat{F}(w)} \Biggr)^2.
  \label{eq:PGM success probability with t=1}
\end{equation}

\subsection{Performance analysis} \label{sect:PGM performance}

In this section we analyze the performance of the PGM algorithm described above on several different classes of Boolean functions. For delta functions our algorithm performs worse than Grover's algorithm. On the other hand, for bent and random functions it needs only one and two queries, respectively.

\subsubsection{Delta functions} \label{sect:PGM delta}

Let us check how our algorithm performs when $f$ is a delta function, i.e., $f(x) = \delta_{x,x_0}$ for some $x_0 \in \Z_2^n$. 
A simple calculation using the Fourier spectrum of a delta function shows that the success probability of \PGM$(f,t)$ is
\begin{equation}
  p_{f}(t) =
  \frac{1}{2^{2n}} \left(
    \left(2^n-1\right) \sqrt{1-\left(\frac{2^n-4}{2^n}\right)^t}
  + \sqrt{1+\left(2^n-1\right) \left(\frac{2^n-4}{2^n}\right)^t}
  \right)^2.
\end{equation}
Unfortunately, if we choose $t = \sqrt{2^n}$, then the success probability goes to $0$ as $n \rightarrow \infty$. In fact, the same happens even if $t = c^n$ for any $c < 2$. Only if we take $t = 2^n$ does the success probability approach a positive constant $1 - 1/e^4 \approx 0.98$ as $n \rightarrow \infty$. This means that the PGM algorithm does not give us the quadratic speedup of Grover's algorithm. (Indeed, this follows from the more general fact that quantum speedup for unstructured search cannot be parallelized~\cite{Zalka99}.) Thus the PGM algorithm is not optimal in general.

\subsubsection{Bent functions}

Let $f$ be a Bent function. Recall from \Sect{Easy = bent} that its Fourier spectrum is ``flat'', i.e., $\abs{\hat{F}(w)} = 1/\sqrt{2^n}$ for all $w \in \Z_2^n$. In this case, \Eq{PGM success probability with t=1} gives $p_{f}(1) = 1$, so we can find the hidden shift with certainty by measuring $\ket{\Phi^1(s)}$ with the pretty good measurement (recall that preparing $\ket{\Phi^1(s)}$ requires only one query to $O_{f_{s}}$), reproducing a result of R{\"o}tteler.

\begin{theorem}[\cite{Roetteler:2010}]\label{thm:Bent PGM}
If $f$ is a bent function then a quantum algorithm can solve $\BHSP_f$ exactly using a single query to $O_{f_{s}}$.
\end{theorem}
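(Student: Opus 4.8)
The plan is to invoke the general success-probability formula for the PGM algorithm that has already been established, and to specialize it to the single-query case. Concretely, I would run $\PGM(f,1)$, which prepares the state $\ket{\Phi^1(s)}$ with a single call to $O_{f_s}$ (this is the circuit of \Fig{phase} with $t=1$, so the second, entangling stage is vacuous). By \Lem{PGM} this algorithm is zero-error: whenever it returns an outcome $s$, that outcome is correct, and it returns a conclusive outcome with probability $p_f(1)$ given by \Eq{PGM success probability with t=1}. Thus proving exactness reduces to showing $p_f(1) = 1$.

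Next I would carry out a one-line calculation. Recall from \Def{Bent} that $f$ bent means $\abs{\hat{F}(w)} = 1/\sqrt{2^n}$ for every $w \in \Z_2^n$. Since there are exactly $2^n$ points $w$, substituting into \Eq{PGM success probability with t=1} gives
\[
p_f(1) = \frac{1}{2^n}\Biggl(\sum_{w \in \Z_2^n} \frac{1}{\sqrt{2^n}}\Biggr)^2 = \frac{1}{2^n}\Bigl(2^n \cdot \tfrac{1}{\sqrt{2^n}}\Bigr)^2 = \frac{1}{2^n}\cdot 2^n = 1.
\]
Hence the inconclusive outcome $*$ occurs with probability zero, and the measurement always identifies $s$ correctly. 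Since only one query to the oracle is used, the claim follows.

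The point worth emphasizing is that essentially no new work is needed: all of the substantive content already lives in \Lem{PGM} (the derivation of $\ket{\Phi^t(s)}$, the orthonormality of the vectors $\ket{E_s}$, and the resulting exact success-probability formula), and the ``flatness'' of the bent spectrum is exactly the condition that saturates that formula. Indeed, one sees immediately by Cauchy--Schwarz applied to $\sum_w \abs{\hat{F}(w)} = \sum_w 1 \cdot \abs{\hat{F}(w)} \le \sqrt{2^n}\,\sqrt{\sum_w \abs{\hat{F}(w)}^2}$, together with the Plancherel normalization $\sum_w \abs{\hat{F}(w)}^2 = 1$, that $p_f(1) \le 1$ always, with equality if and only if all $\abs{\hat{F}(w)}$ are equal---that is, if and only if $f$ is bent. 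So there is no real obstacle here; the only thing to verify is that the bent condition is precisely the equality case of the formula, which it is by construction. (This also dovetails with the ``only if'' direction of \Thm{Bent exact}, although that statement concerns arbitrary one-query algorithms rather than just the PGM.)
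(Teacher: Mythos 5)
Your proof is correct and follows essentially the same route as the paper: the paper likewise runs \PGM$(f,1)$ and observes that substituting the flat bent spectrum $\abs{\hat{F}(w)} = 1/\sqrt{2^n}$ into \Eq{PGM success probability with t=1} yields $p_f(1) = 1$, with all the substantive work already done in \Lem{PGM}. Your closing Cauchy--Schwarz remark (that $p_f(1) \le 1$ with equality exactly for bent functions) is a correct and pleasant addition, though the paper does not state it and it is not needed for this direction.
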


\subsubsection{Random functions} \label{sect:Random}

For random Boolean functions, our algorithm performs almost as well as for bent functions. For random $f$, we are only able to show that the expected success probability of the one-query algorithm \PGM$(f,1)$ is at least $2/\pi + o(1)$ for large $n$ (see \Thm{Random1} in \Apx{Random1}), so the algorithm only succeeds with constant probability, which cannot easily be boosted. However, the expected success probability of the two-query algorithm \PGM$(f,2)$ is exponentially close to $1$.

\begin{restatable}{theorem}{RANDOM}\label{thm:Random2}
Let $f$ be an $n$-argument Boolean function chosen uniformly at random and suppose that a hidden shift for $f$ is chosen adversarially. Then \PGM$(f,2)$ solves $\BHSP_f$ with expected success probability $\bar{p} \geq 1 - \frac{3}{64} \cdot 2^{-n}$.
\end{restatable}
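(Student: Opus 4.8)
The plan is to work directly from the success-probability formula of \Lem{PGM} specialized to $t=2$, namely $p_f(2) = \frac{1}{2^n}\bigl(\sum_{w}\FC{2}(w)\bigr)^2$ with $\FC{2}(w) = \sqrt{\bigl[\hat F^2\bigr]^{*2}(w)}$, and to average it over a uniformly random $f$. The engine of the argument is the autocorrelation $A \defeq F * F$, for which \Eq{FS norm} gives $\FC{2}(w)^2 = \frac{1}{\sqrt{2^n}}\,\widehat{A^2}(w)$. Since $A(0) = \sum_x F(x)^2 = 1$ for every $f$, expanding the Fourier transform splits off the $x=0$ term and yields the exact identity $2^n\FC{2}(w)^2 = 1 + \gamma_w$, where $\gamma_w \defeq \sum_{x\neq 0}(-1)^{w\cdot x}A(x)^2$. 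The crucial feature is that $\gamma_w$ has vanishing average over $w$ (the $w$-average projects onto $x=0$, which is excluded), so $\frac{1}{2^n}\sum_w(1+\gamma_w)=1$.

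First I would record the clean consequence that $\sqrt{p_f(2)} = \frac{1}{2^n}\sum_w\sqrt{1+\gamma_w}$, i.e.\ $\sqrt{p_f(2)}$ is the average over uniform $w$ of $Y_w \defeq \sqrt{1+\gamma_w}$, while $\frac{1}{2^n}\sum_w Y_w^2 = 1$. Hence $1 - p_f(2)$ is exactly $\operatorname{Var}_w(Y_w)$, the variance of $Y_w$ over $w$. To turn this into something computable I would bound the variance from above using the concavity of the square root, e.g.\ via $\sqrt{1+u}\ge 1+\tfrac u2-\tfrac{u^2}{2}$ (valid for all $u\ge -1$, and $1+\gamma_w=2^n\FC{2}(w)^2\ge 0$ always). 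Averaging over $w$ and using $\frac{1}{2^n}\sum_w\gamma_w=0$ together with the Parseval-type identity $\frac{1}{2^n}\sum_w\gamma_w^2=\sum_{x\neq0}A(x)^4$ gives $\sqrt{p_f(2)}\ge 1-\tfrac12\sum_{x\neq0}A(x)^4$, hence the clean per-function estimate $1 - p_f(2)\le \sum_{x\neq 0}A(x)^4$.

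Next I would take the expectation over random $f$. The key structural observation is that for each fixed $x\neq 0$ the autocorrelation $A(x) = \frac{1}{2^n}\sum_y(-1)^{f(y)+f(x-y)}$ is, after pairing each $y$ with $x-y$ (a fixed-point-free involution for $x\neq0$), an average of $2^{n-1}$ \emph{independent} Rademacher variables. Consequently all moments of $A(x)$ are explicit; in particular $\E_f[A(x)^4]$ has a closed form coming from the fourth moment of a sum of independent signs, and summing over the $2^n-1$ nonzero shifts yields $\E_f\bigl[\sum_{x\neq 0}A(x)^4\bigr]$ in closed form, which I would then simplify toward the stated bound.

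The step I expect to be the main obstacle is obtaining the \emph{precise} constant rather than merely the correct order $\Theta(2^{-n})$. The concavity estimate above is lossy, and because the $A(x)$ for different $x$ are built from the same values of $f$ (so are correlated) and $\gamma_w$ is markedly skewed ($\E_w[\gamma_w^3]=\sum_{x_1+x_2+x_3=0}A(x_1)^2A(x_2)^2A(x_3)^2\ge 0$), a naive second-order delta-method expansion of $\operatorname{Var}_w(Y_w)$ is not self-consistent: several orders in $\gamma_w$ contribute at the same scale $2^{-n}$. I would therefore expect the sharp constant to require a more careful, non-lossy treatment of $\operatorname{Var}_w(\sqrt{1+\gamma_w})$, or an exact evaluation of $\E_f$ that tracks the joint moment structure of the autocorrelations, rather than the one-line square-root bound; pinning down that constant is the delicate part of the proof.
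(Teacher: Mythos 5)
Your route is genuinely different from the paper's, and carried to completion it is sound---in fact it proves more than the paper's own argument delivers. The paper (\Apx{Random functions}) treats $X \defeq [\FC{2}(w)]^2$ as a random variable over both $f$ and $w$, passes to $\tilde p = 2^n(\E[\sqrt{X}])^2$ via Cauchy--Schwarz, and then applies Cantelli's one-sided tail inequality, so it only needs the mean $\mu = 2^{-n}$ from \Eq{mu} and the variance from \Eq{sigma2}; the variance is computed by the same pairing combinatorics that underlies your Rademacher computation, and indeed your $\E_f\bigl[\sum_{x\neq 0}A(x)^4\bigr] = 12\cdot 2^{-n} - 28\cdot 2^{-2n} + 16\cdot 2^{-3n}$ is exactly $2^{2n}\sigma^2$ from \Eq{sigma2}. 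But Cantelli is lossy: optimizing $k$ in \Eq{bar-p bound} with $\sigma \approx \sqrt{12}\cdot 2^{-3n/2}$ can only give $1-\Theta(2^{-n/3})$, and with $k = 2^{n/6}$ the displayed bound is $1 - 3\cdot 2^{-n/3}$. Your approach instead isolates the clean per-function identity $1 - p_f(2) = \operatorname{Var}_w\bigl(\sqrt{1+\gamma_w}\bigr)$ and the deterministic bound $1 - p_f(2) \le \sum_{x\neq 0}A(x)^4$; your ingredients all check out (the inequality $\sqrt{1+u} \ge 1 + \tfrac{u}{2} - \tfrac{u^2}{2}$ holds for all $u \ge -1$, with the trivial fallback $1-p_f(2)\le 1$ when $\sum_{x\neq0}A(x)^4 > 1$; the pairing $y \leftrightarrow y+x$ is a fixed-point-free involution for $x \neq 0$, so $A(x)$ is an average of $2^{n-1}$ i.i.d.\ signs with $\E_f[A(x)^4] = 3m^{-2} - 2m^{-3}$ at $m = 2^{n-1}$). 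Taking $\E_f$ gives $\bar p \ge 1 - 12\cdot 2^{-n}$, an exponentially better dependence on $n$ than the Cantelli route, with less machinery.

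On the constant: you flag pinning down $\frac{3}{64}$ as the delicate step, but in fact it is unattainable, and your own identity shows why. At $w = 0$ one has $1 + \gamma_0 = \sum_x A(x)^2$, which concentrates near $3$ for random $f$ (since $\E_f[A(x)^2] = 2^{1-n}$ for each $x \neq 0$, with total variance $O(2^{-n})$), while $\E_w\bigl[\sqrt{1+\gamma_w}\bigr] \le 1$ by Cauchy--Schwarz; hence $1 - p_f(2) = \operatorname{Var}_w\bigl(\sqrt{1+\gamma_w}\bigr) \ge 2^{-n}\bigl(\sqrt{3}-1\bigr)^2\bigl(1-o(1)\bigr) \approx 0.54 \cdot 2^{-n}$ in expectation, which exceeds $\frac{3}{64}\cdot 2^{-n} \approx 0.047\cdot 2^{-n}$ for large $n$. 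So the printed constant is an error in the statement, and it is not delivered by the paper's proof either: the final line of the appendix does not follow from the preceding display, which evaluates to $1 - 3\cdot 2^{-n/3}$. Your completed argument, $\bar p \ge 1 - 12\cdot 2^{-n}$, is therefore strictly stronger than what the paper actually establishes and is optimal up to the constant factor; your observation that $\E_w[\gamma_w^3] \ge 0$ contributes at the same $2^{-n}$ scale (the $w=0$ outlier dominates the odd moments) is correct, but it only affects the sharp leading constant, which is $\Theta(1)\cdot 2^{-n}$ and in particular not $\frac{3}{64}$.
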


The proof uses the second moment method to lower bound the expected success probability.  We compute the variance of the $2$-fold Fourier spectrum by relating it to the combinatorics of pairings.  The proof appears in \Apx{Random functions}.

\Thm{Random2} implies that our algorithm can determine the hidden shift with near certainty as $n \to \infty$. This is surprising since some functions, such as delta functions (see \Sect{Hard = peaked}), require $\Omega(\sqrt{2^n})$ queries. Furthermore, a randomly chosen function could have an undetectable shift (see \Sect{b-shifts}), in which case it is not possible in principle to completely determine an adversarially chosen shift with success probability more than $1/2$.

At first glance, \Thm{Random2} may appear to be a strengthening of the main result of \cite{GRR11}, which shows that $O(n)$ queries suffice to solve a version of the Boolean hidden shift problem for a random function.  However, while our approach uses dramatically fewer queries, the results are not directly comparable: Ref.~\cite{GRR11} considers a weaker model in which the unshifted function is given by an oracle rather than being known explicitly.  In particular, while the result of \cite{GRR11} gives an average-case exponential separation between classical and quantum query complexity, such a result is not possible in the model where the function is known explicitly. In this model, there cannot be a super-polynomial speedup for quantum computation.  This follows from general results from learning theory discussed at the end of \Sect{Intro}.  In particular, it follows that if the quantum query complexity of the problem for a function $f$ is $Q$, then the deterministic classical query complexity of the problem for the same function is at most $O(nQ^2)$ \cite{SG04}.

\section{Quantum rejection sampling with parallel queries} \label{sect:QRS with t queries}

In this section we explain a hybrid approach that combines the Quantum Rejection Sampling (QRS) algorithm for the $\BHSP$~\cite{QRS} with the PGM approach. The resulting algorithm does not require an extra amplification step for boosting the success probability, unlike the original QRS algorithm.

\subsection{Original quantum rejection sampling approach} \label{sect:Basic QRS}

\begin{theorem}[\cite{QRS}]\label{thm:QRS}
For a given Boolean function $f\colon \Z_2^n \to \Z_2$, define unit vectors $\vc{\quantump}, \vc{\quantums} \in \R^{2^n}$ as $\quantump_{\bv{w}} \defeq \abs{\hat{F}(\bv{w})}$ and $\quantums_{\bv{w}} \defeq 1/\sqrt{2^n}$ for $\bv{w} \in \Z_2^n$. Moreover, let
\begin{align}
  p_{\min} &\defeq (\vc{\quantums}\tp \cdot \vc{\quantump})^2
             = \frac{1}{2^n} \biggl( \sum_{\bv{w} \in \Z_2^n} \abs{\hat{F}(\bv{w})} \biggr)^2, &
  p_{\max} &\defeq \!\! \sum_{k\colon\quantump_k>0} \!\! \quantums_k^2
             = \frac{1}{2^n} \abs{\set{w \colon \hat{F}(w) \neq 0}}.
  \label{eq:ps}
\end{align}
For any desired success probability $p \in [p_{\min}, p_{\max}]$, the quantum rejection sampling algorithm solves $\BHSP_f$ with $O(1/\norm{\waterfilling})$ queries, where the ``water-filling'' vector $\waterfilling \in \R^{2^n}$ is defined in~\cite{QRS}.
\end{theorem}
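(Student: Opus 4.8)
The plan is to recognize $\BHSP_f$ as an instance of the amplitude-reshaping problem solved by quantum rejection sampling in~\cite{QRS} and to invoke that framework; the work specific to this statement lies in setting up the reduction and correctly identifying the parameters $p_{\min}$ and $p_{\max}$. First I would recall from \Eq{phase-single} that one query to $O_{f_s}$, placed between two Hadamard transforms, prepares $\sum_{w \in \Z_2^n} (-1)^{s \cdot w} \hat{F}(w) \ket{w}$. Because $f$ is known, the signs of the coefficients $\hat{F}(w)$ are known, so a diagonal correction gives $\sum_{w \in \Z_2^n} (-1)^{s \cdot w} \abs{\hat{F}(w)} \ket{w}$: a state whose amplitude profile is exactly $\vc{\pi}$ and whose phases carry the unknown shift. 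The key observation is that if this profile could be reshaped into the flat profile $\vc{\sigma}$ while leaving the phases $(-1)^{s \cdot w}$ untouched, a final $H^{\otimes n}$ would map the result to $\ket{s}$ and reveal $s$ exactly.

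This is precisely the resampling task handled by the QRS framework, with the phases $(-1)^{s \cdot w}$ playing the role of the unknown content that must be preserved. I would then pin down the two endpoints. The quantity $p_{\min}$ is the success probability of the naive one-query algorithm without any amplification: applying $H^{\otimes n}$ directly to the sign-corrected state puts amplitude $\frac{1}{\sqrt{2^n}} \sum_{w} \abs{\hat{F}(w)}$ on $\ket{s}$, giving $(\vc{\sigma}\tp \cdot \vc{\pi})^2$, in agreement with \Eq{PGM success probability with t=1}. The quantity $p_{\max}$ is the largest fidelity with $\vc{\sigma}$ that any reshaping can achieve: since amplitude can only be redistributed within the support of $\vc{\pi}$, the best reachable state is the uniform superposition over $\set{w \colon \hat{F}(w) \neq 0}$, whose squared overlap with $\vc{\sigma}$ is $\sum_{k \colon \pi_k > 0} \sigma_k^2 = \frac{1}{2^n} \abs{\set{w \colon \hat{F}(w) \neq 0}}$.

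With source $\vc{\pi}$, target $\vc{\sigma}$, and the range $[p_{\min}, p_{\max}]$ fixed, I would invoke the QRS algorithm: for any intermediate $p$ it performs amplitude amplification on the accepting branch of a $w$-dependent rotation chosen according to the water-filling vector $\waterfilling$, which optimally trades achieved fidelity against amplification cost, yielding query complexity $O(1/\norm{\waterfilling})$. Each round uses $O(1)$ oracle queries, since its two reflections---about the source state and about the accepting subspace---are built from the preparation circuit and its inverse together with the rotation, all of which depend only on $f$. The main obstacle is exactly to verify this $s$-independence: one must check that every operator other than the single embedded oracle $O_{f_s}$ is a function of $f$ alone, so that the phases $(-1)^{s \cdot w}$ propagate coherently through every round and the amplified state is genuinely $\vc{\sigma}$-like, allowing $H^{\otimes n}$ to recover $s$. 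Granting this, the query bound and the endpoints follow from the general analysis of~\cite{QRS}.
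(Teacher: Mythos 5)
This theorem is stated with attribution to \cite{QRS} and the paper supplies no proof of it, so the benchmark is the argument of \cite{QRS} itself, and your reconstruction matches it: one query between Hadamards prepares $\sum_{w \in \Z_2^n} (-1)^{s \cdot w} \hat{F}(w) \ket{w}$ as in \Eq{phase-single}, a known-sign correction reduces $\BHSP_f$ to reshaping the amplitude profile $\vc{\quantump}$ into $\vc{\quantums}$ while preserving the unknown phases, and the water-filling amplitude amplification of \cite{QRS} performs this at cost $O(1/\norm{\waterfilling})$. Your identification of the endpoints ($p_{\min}$ as the unamplified overlap, consistent with \Eq{PGM success probability with t=1}, and $p_{\max}$ as the best fidelity reachable on the support of $\vc{\quantump}$, since no amplitude can be created where $\hat{F}(w)=0$) and your check that both reflections are built from $f$-dependent circuits with the oracle embedded---so the phases $(-1)^{s \cdot w}$ propagate coherently through every amplification round---are precisely the points on which the original proof turns.
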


In particular, if $p_{\max} = 1$ then the QRS algorithm can achieve any success probability arbitrarily close to $1$ with $O \bigl( 1/(\sqrt{2^n} \hat{F}_{\min}) \bigr)$ queries, where $\hat{F}_{\min} \defeq \min_w \abs{\hat{F}(w)}$. However, if $\hat{F}(w) = 0$ for some $w$, then from \Eq{ps} we see that $p_{\max} < 1$. In this case one needs an additional amplification step to boost the success probability (a method based on SWAP test was proposed in~\cite{QRS}). We show that this step can be avoided by using $t$ parallel queries in the original QRS algorithm for some $t \leq n$.

\subsection{Non-degenerate functions with almost vanishing spectrum}

Before explaining our hybrid approach, let us verify that there exist non-trivial functions with a large fraction of their Fourier spectrum equal to zero, so the issue discussed above applies.

It is easy to construct degenerate functions with the desired property. For example, if a function is shift-invariant, i.e., $f(x+s) = f(x)$ for some $s \in \Z_2^n$, then at least half of the Fourier spectrum of $f$ is guaranteed to be zero.  The same also happens if $f(x+s) = f(x) + 1$ (see \Lem{b-shifts} in \Sect{b-shifts}). However, such examples are not interesting, since a shift-invariant $n$-argument Boolean function is equivalent to an $(n-1)$-argument Boolean function (see \Sect{b-shifts} for more details).

Instead, we consider Boolean functions defined using decision trees. A \emph{decision tree} is a binary tree whose vertices are labeled by arguments of $f$ and whose leaves contain the values of $f$. An example of such tree and the rules for evaluating the corresponding function are given in \Fig{Tree}.

Without loss of generality, we can consider only decision trees where on each path from the root to a leaf no argument appears more than once (otherwise some parts of the tree would not be reachable). The length of a longest path from the root to a leaf is the \emph{height} of the tree. If a Boolean function is defined by a decision tree of height $h$, then all its Fourier coefficients with Hamming weight larger than $h$ are zero (see \Lem{Tree} in \Sect{Trees}). This observation can be used to construct non-degenerate Boolean functions with almost vanishing Fourier spectrum.

\begin{figure}
  \centering


\begin{tikzpicture}
  [font = \footnotesize,
   vx/.style = {circle,    draw = black, fill = gray!10, minimum size = 18, inner sep = 0pt},
   lv/.style = {rectangle, draw = black, fill = white,   minimum size = 10, inner sep = 0pt},
  ]

\def\dx{12.5}
\def\dy{20}

\tikzstyle{level 1} = [level distance = \dy, sibling distance = 16*\dx]
\tikzstyle{level 2} = [level distance = \dy, sibling distance =  8*\dx]
\tikzstyle{level 3} = [level distance = \dy, sibling distance =  4*\dx]
\tikzstyle{level 4} = [level distance = \dy, sibling distance =  2*\dx]
\tikzstyle{level 5} = [level distance = \dy, sibling distance =  1*\dx]

\newcommand{\tree}[3]{
  node [vx] {$x_{#1}$}
  child {#2} 
  child {#3} 
}

\newcommand{\leave}[1]{
  node [lv] {$#1$}
}

\path
\tree{2}{
  \tree{1}{
    \tree{5}{
      \tree{4}{
        \tree{10}{\leave{0}}{\leave{1}} }{
        \leave{1}
      }
    }{
      \leave{1}
    }
  }{
    \tree{7}{
      \tree{5}{
        \tree{3}{\leave{0}}{\leave{1}} }{
        \leave{0}
      }
    }{
      \tree{6}{
        \leave{0} }{
        \tree{9}{\leave{0}}{\leave{1}}
      }
    }
  }
}{
  \tree{7}{
    \tree{8}{
      \tree{10}{
        \tree{9}{\leave{1}}{\leave{0}} }{
        \leave{1}
      }
    }{
      \tree{4}{
        \tree{9}{\leave{1}}{\leave{0}} }{
        \leave{1}
      }
    }
  }{
    \tree{1}{
      \leave{1}
    }{
      \tree{5}{
        \tree{3}{\leave{1}}{\leave{0}} }{
        \tree{10}{\leave{0}}{\leave{1}}
      }
    }
  }
};

\end{tikzpicture}

  \caption[Decision tree for function $f_{10}$]{Decision tree for a $10$-argument Boolean function $f_{10}$. To compute the value of the function for given input $x_1, \dotsc, x_{10} \in \Z_2^n$, proceed down the tree starting from the root; move left if the corresponding argument is equal to $0$ or right if it is equal to $1$. Once a leaf is reached, its label is the value of the function for the given input. For example, $f_{10}(x_1, \dotsc, x_{10})$ evaluates to zero when $x_2 = x_1 = x_5 = x_4 = x_{10} = 0$, since the leftmost leaf has label zero. This tree has height five.}
  \label{fig:Tree}
\end{figure}

\begin{example*}
The $10$-argument Boolean function $f_{10}$ whose decision tree is shown in \Fig{Tree} has no shift invariance, yet $928$ (out of $2^{10} = 1024$) of its Fourier coefficients are zero.
\end{example*}

\newcommand\ttt{\texorpdfstring{$t$}{t}}
\subsection{The \ttt-fold Fourier spectrum as \ttt{} increases}

Let us now show how to deal with the zero Fourier coefficients. The main idea stems from the following observation: if $S_t \defeq \set{w \in \Z_2^n \colon \FC{t}(w) \neq 0}$ then $S_{t+1} = S_t + S_1$ (see \Prop{expansion} in \Sect{Increasing t}). If $S_1$ spans $\Z_2^n$, we can apply this recursively and eliminate all zeroes from the $t$-fold Fourier spectrum $\FC{t}$. In particular, it suffices to take $t \leq n$ (see \Lem{Eliminating zeroes} in \Sect{Increasing t}). For example, for $f_{10}$ the fraction of non-zero values of $\FC{t}$ for $t = 1, 2, 3, 4$ is $0.09$, $0.61$, $0.94$, $1$, respectively. In particular, $\FC{4}$ is non-zero everywhere.

\subsection{Quantum rejection sampling with \ttt-fold queries} \label{sect:t-fold QRS}

We can use quantum rejection sampling with $t$ queries in parallel to solve the $\BHSP$. Suppose we transform the $t$-fold Fourier state $\ket{\Phi^t(s)}$ from \Eq{Phi} into the PGM basis vector $\ket{E_s}$ defined in \Eq{Es} using QRS. This corresponds to setting $\quantump_{\bv{w}} = \FC{t}(w)$ and $\quantums_{\bv{w}} = 1/\sqrt{2^n}$. Since the circuit from \Fig{phase} can be used to prepare $\ket{\Phi^t(s)}$ with $t$ queries, \Thm{QRS} still holds if $\abs{\hat{F}(w)}$ is replaced by $\FC{t}(w)$ and the query complexity is multiplied by $t$. This observation together with \Lem{Eliminating zeroes} implies that as long as $f$ is not shift invariant, we can recover the hidden shift $s$ with success probability arbitrarily close to $1$ using quantum rejection sampling with some $t \leq n$.

\begin{theorem}\label{thm:QRSt}
Let $f\colon \Z_2^n \to \Z_2$ be a Boolean function and let $p$ be sufficiently large. Then $\BHSP_f$ can be solved with success probability $p$ using $O(t/\norm{\waterfilling})$ queries for some $t \in \set{1, \dotsc, n}$ where $\quantump_{\bv{w}} \defeq \FC{t}(w)$, $\quantums_{\bv{w}} \defeq 1/\sqrt{2^n}$, and the ``water-filling'' vector $\waterfilling \in \R^{2^n}$ is defined in~\cite{QRS}.
\end{theorem}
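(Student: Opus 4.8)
The plan is to recognize the state conversion performed implicitly by the PGM as an instance of the amplitude-reshaping problem solved by quantum rejection sampling, and then to substitute the $t$-fold spectrum into \Thm{QRS}. First I would recall from \Fig{phase} that the circuit prepares $\ket{\Phi^t(s)}$ using exactly $t$ queries to $O_{f_s}$, and that by \Eq{Phi} this state has the form $\sum_{w} (-1)^{s \cdot w} \ket{\FS{t}{w}} \ket{w}$, where by \Def{Ft} the register $\ket{\FS{t}{w}}$ has norm $\FC{t}(w)$. Writing $\ket{\FS{t}{w}} = \FC{t}(w)\,\ket{\hat\psi_w}$ with $\ket{\hat\psi_w}$ normalized (and $f$-dependent but $s$-independent), the amplitude carried on the label register $\ket{w}$ is precisely $\FC{t}(w) \geq 0$, and the target PGM vector $\ket{E_s}$ from \Eq{Es} is the same state with these amplitudes replaced by the uniform value $1/\sqrt{2^n}$.

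Next I would invoke \Thm{QRS} with the substitution $\quantump_{\bv{w}} \defeq \FC{t}(w)$ and $\quantums_{\bv{w}} \defeq 1/\sqrt{2^n}$. The reshaping $\quantump \to \quantums$ is exactly what QRS accomplishes, and the water-filling construction together with its query-complexity bound depends only on the magnitude profiles $\quantump$ and $\quantums$, so the statement of \Thm{QRS} transfers essentially verbatim. The one accounting change is that each ``query'' in the QRS routine now corresponds to one preparation of $\ket{\Phi^t(s)}$, which costs $t$ oracle calls; hence the $O(1/\norm{\waterfilling})$ reshaping rounds translate into $O(t/\norm{\waterfilling})$ queries.

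Finally I would control the achievable success probability. With the $t$-fold profile, the maximal attainable success probability is $p_{\max}^{(t)} = \frac{1}{2^n}\abs{\set{w \colon \FC{t}(w) \neq 0}}$, the analogue of $p_{\max}$ in \Eq{ps}. By \Prop{expansion} the support of $\FC{t}$ satisfies $S_{t+1} = S_t + S_1$, so whenever $f$ is not shift-invariant (equivalently, $S_1$ spans $\Z_2^n$), \Lem{Eliminating zeroes} guarantees that some $t \leq n$ yields $S_t = \Z_2^n$ and therefore $p_{\max}^{(t)} = 1$. For that $t$, any sufficiently large $p$---i.e.\ any $p$ in the feasible interval $[\,p_{\min}^{(t)}, 1\,]$---is achievable, which is the claim.

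I would expect the main obstacle to be the second step: justifying that the QRS procedure of \cite{QRS}, originally analyzed for the single-query state with amplitude profile $\abs{\hat{F}(w)}$, is completely insensitive to the extra reference registers $\ket{\hat\psi_w}$ that appear in $\ket{\Phi^t(s)}$. The point to verify is that QRS manipulates only the amplitude indexed by the label $w$---via a $w$-controlled rotation of a flag qubit followed by amplitude amplification on that flag---so the registers holding $\ket{\hat\psi_w}$ are inert and ride along unchanged, leaving the water-filling analysis untouched apart from the $t$-fold replacement of the amplitudes. Once this is checked, both the query bound and the success-probability range follow by direct substitution into \Thm{QRS}.
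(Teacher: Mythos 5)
Your proposal is correct and follows essentially the same route as the paper: prepare $\ket{\Phi^t(s)}$ with $t$ parallel queries, apply the QRS conversion of \Thm{QRS} with $\quantump_{\bv{w}} = \FC{t}(w)$ and $\quantums_{\bv{w}} = 1/\sqrt{2^n}$ (so each QRS round costs $t$ oracle calls, giving $O(t/\norm{\waterfilling})$ queries), and invoke \Lem{Eliminating zeroes} to obtain some $t \leq n$ for which $\FC{t}$ is non-vanishing and hence $p_{\max} = 1$. Your explicit check that the QRS procedure acts only on the label register $\ket{w}$, with the normalized states $\ket{\FS{t}{w}}/\FC{t}(w)$ riding along inertly, is precisely the (unstated) justification behind the paper's assertion that \Thm{QRS} ``still holds'' when $\abs{\hat{F}(w)}$ is replaced by $\FC{t}(w)$.
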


\section{Conclusions} \label{sect:Conclusions}

A comparison of quantum query complexity bounds for solving the $\BHSP$ for different classes of functions is given in \Tab{Complexity}. If the QRS algorithm works for random functions with $O(1)$ queries, then it is optimal up to constant factors in all three cases listed in the table. However, from \Sect{Basic QRS} we know that the basic QRS algorithm without amplification performs poorly when $f$ has many zero Fourier coefficients (which is the case, e.g., for the decision trees considered in \Sect{Trees}). This suggests that the basic (unamplified) QRS algorithm is likely not optimal in general.

\begin{table}[ht]
\begin{center}
\begin{tabular}{c|c|c|c|c}
\multicolumn{1}{c|}{\multirow{2}{*}{Approach}} &
\multicolumn{3}{c|}{Functions} & {\multirow{2}{*}{Comments}}
\\ \cline{2-4}
& delta & bent & random
\\ \hline
 PGM                     & $O(2^n)$             & $1$ & $2$ & zero error \\
 QRS~\cite{QRS}          & $O( \sqrt{2^n})$     & $1$ & ? \\
 ``Simon''~\cite{GRR11}  & $O(n\sqrt{2^n})$     & $O(n)$ & $O(n)$\phantom{?} & zero error, black-box $f$ okay \\
Learning theory \cite{AS05}  &$O(n\log n\sqrt{2^n})$ & $O(n\log n)$ & $O(n \log n)$ & optimal up to log factors $\forall\, f$
\\ \hline
Lower bounds: & $\Omega(\sqrt{2^n})$ & $1$ & $1$
\end{tabular}
\end{center}
\caption[Summary of quantum query complexity upper and lower bounds for $\BHSP$]{Summary of quantum query complexity upper and lower bounds for $\BHSP$. We do not know the query complexity of the QRS algorithm for random functions.}
\label{tab:Complexity}
\end{table}

The ``Simon''-type approach due to~\cite{GRR11} always has an overhead of a factor $O(n)$, reflecting the fact that at least $n$ linearly independent equations are needed to solve a linear system in $n$ variables. (Note that this approach works in the weaker model where the unshifted function is given by an oracle, so it still provides an upper bound when the function is known explicitly.) The learning theory approach \cite{AS05} also has logarithmic overhead. Finally, the PGM approach performs very well in the easy cases, the bent and random functions, but fails to provide any speedup for delta functions. As mentioned in \Sect{PGM delta}, this can be attributed to the fact that Grover's algorithm is intrinsically sequential.

In summary, none of the algorithms listed in \Tab{Complexity} is optimal. However, by combining these algorithms and possibly adding some new ideas, one might obtain an algorithm that is optimal for all Boolean functions. In particular, the QRS approach with $t$-fold queries appears promising.

We conclude by mentioning some open questions regarding the Boolean hidden shift problem:
\begin{enumerate}
  \item Find a query-optimal quantum algorithm for general functions (recall that the learning theory algorithm is only optimal up to logarithmic factors \cite{SG04,AS05}).
  \item Identify natural classes of Boolean functions lying between the two extreme cases of bent and delta functions (say, the decision trees considered in \Sect{Trees}) and characterize the quantum query complexity of the $\BHSP$ for these functions.
  \item Determine the number of queries required by the QRS algorithm for random functions.
  \item What is the query complexity of verifying a given shift? (A quantum procedure with one-sided error, based on the swap test, was given in~\cite{QRS}.)
  \item What is the quantum query complexity of extracting one bit of information about the hidden shift?
  \item What is the classical query complexity of the Boolean hidden shift problem?
  \item Can we say anything non-trivial about the time complexity of the Boolean hidden shift problem, either classically or quantumly?
  \item Can the $\BHSP$ for random functions be solved with a single query? Our approach based on the PGM only gives a lower bound on the expected success probability that approaches $2/\pi$ for large $n$ (see \Thm{Random1}), whereas we require a success probability that approaches $1$ as $n \to \infty$. It might be fruitful to consider querying the oracle with non-uniform amplitudes.
\end{enumerate}

Finally, it might be interesting to consider the generalization of the Boolean hidden shift problem to the case of functions $f\colon \Z_d^n \to \Z_d$.

\subparagraph*{Acknowledgements}

We thank J{\'e}r{\'e}mie Roland for useful discussions and Dmitry Gavinsky for suggesting to use decision trees to construct non-degenerate functions with many zero Fourier coefficients. Part of this work was done while AC and MO were visiting NEC Labs, and during the Quantum Cryptanalysis seminar (No.~11381) at Schloss Dagstuhl. This work was supported in part by NSERC, the Ontario Ministry of Research and Innovation, and the US ARO/DTO. MO acknowledges additional support from the DARPA QUEST program under contract number HR0011-09-C-0047.

\appendix

\section{Converse for bent functions} \label{apx:Bent}

The goal of this appendix is to prove \Thm{Bent exact}. First we need an alternative characterization of bent functions.

\begin{proposition}\label{prop:Bent condition}
A Boolean function $f$ is bent if and only if $(F * F)(x) = \delta_{x,0}$.
\end{proposition}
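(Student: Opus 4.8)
The plan is to reduce the claimed identity to the flatness condition of \Def{Bent} by means of the convolution theorem, exploiting that the Fourier transform is an involution. The starting point is property~5 of the Fact in \Sect{QFT and convolution}, which gives $\widehat{F*F} = \sqrt{2^n}\,\hat{F}^2$, equivalently $\hat{F}^2 = \widehat{F*F}/\sqrt{2^n}$. Since $F$ is real-valued, $\hat{F}$ is real, so $\hat{F}^2 = \abs{\hat{F}}^2$; hence by \Def{Bent}, $f$ is bent precisely when $\hat{F}^2$ is the constant function equal to $1/2^n$ at every $w \in \Z_2^n$.

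First I would compute the Fourier transform of this constant function. Writing $G(w) \defeq 1/2^n$ for all $w$, a direct calculation using $\sum_{w \in \Z_2^n}(-1)^{x \cdot w} = 2^n \delta_{x,0}$ gives $\hat{G}(x) = \frac{1}{\sqrt{2^n}}\,\delta_{x,0}$. On the other hand, applying the Fourier transform to both sides of $\hat{F}^2 = \widehat{F*F}/\sqrt{2^n}$ and using linearity together with self-inverseness ($\hat{\hat{H}}=H$, property~2) yields $\widehat{\hat{F}^2} = (F*F)/\sqrt{2^n}$.

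Combining these, I would establish the equivalence in both directions at once: because the Fourier transform is a bijection, $\hat{F}^2$ equals the constant $1/2^n$ if and only if $\widehat{\hat{F}^2}$ equals $\hat{G}$. By the two identities just derived, this says $(F*F)/\sqrt{2^n} = \frac{1}{\sqrt{2^n}}\,\delta_{x,0}$, i.e.\ $(F*F)(x) = \delta_{x,0}$. This is exactly the claim, and it is genuinely an ``if and only if'' because every step in the chain is reversible.

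I do not expect any real obstacle here; the only thing to watch carefully is the normalization of the Fourier transform convention in \Def{Fourier} (the $1/\sqrt{2^n}$ factors), so that all the constants match up correctly. As a sanity check I would note directly that $(F*F)(0) = \frac{1}{2^n}\sum_{y \in \Z_2^n}(-1)^{2f(y)} = 1$ holds for every $f$, so the real content of the proposition is the vanishing of the autocorrelation $(F*F)(x)$ at all $x \neq 0$, which is consistent with the right-hand side $\delta_{x,0}$.
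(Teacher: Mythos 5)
Your proof is correct and follows essentially the same route as the paper's: both rest on the convolution identity $\widehat{F*F} = \sqrt{2^n}\,\hat{F}^2$ together with self-inverseness of the Fourier transform, the paper merely carrying out the two directions as separate explicit computations where you invoke injectivity of the transform once. The normalizations all check out, so there is nothing to fix.
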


\begin{proof}
If $(F * F)(x) = \delta_{x,0}$, then using identities from \Sect{QFT and convolution}, we find
\begin{equation}
  \hat{F}^2(w)
  = \frac{1}{\sqrt{2^n}} (\widehat{F * F}) (w)
  = \frac{1}{2^n} \sum_{x \in \Z_2^n} (-1)^{w \cdot x} (F * F) (x)
  = \frac{1}{2^n}
\end{equation}
so $f$ is bent. Conversely, if $f$ is bent then
\begin{equation}
  (F * F)(w)
  = \sqrt{2^n} \widehat{\hat{F}^2}(w)
  = \sum_{x \in \Z_2^n} (-1)^{w \cdot x} \hat{F}^2(x)
  = \sum_{x \in \Z_2^n} (-1)^{w \cdot x} \frac{1}{2^n}
  = \delta_{w,0}
\end{equation}
and the result follows.
\end{proof}

\newcommand{\0}{\varnothing}

\BENT*

\begin{proof}
The most general one-query algorithm for solving $\BHSP_f$ using a controlled phase oracle (or equivalently, an oracle that computes the function in a register) performs a query on some superposition of all binary strings $x \in \Z_2^n$ and an extra symbol ``$\0$'' that allows for the possibility of not querying the oracle. Without loss of generality, the initial state is
\begin{equation}
  \alpha_{\0} \ket{\0} + \sum_{x \in \Z_2^n} \alpha_x \ket{x}
\end{equation}
for some amplitudes $\alpha_\0 \in \C$ and $\alpha_x \in \C$ for $x \in \Z_2^n$ such that $\abs{\alpha_{\0}}^2 + \sum_{x \in \Z_2^n} \abs{\alpha_x}^2 = 1$. The oracle acts trivially on $\ket{\0}$, so the state after the query is
\begin{equation}
  \ket{\phi_s} \defeq \alpha_{\0} \ket{\0}
  + \sum_{x \in \Z_2^n} \alpha_x (-1)^{f(x+s)} \ket{x}
\end{equation}
where $s \in \Z_2^n$ is the hidden shift. For an exact algorithm, we must have
\begin{equation}
  \forall\, s \neq s': \;
0 = \braket{\phi_s}{\phi_{s'}}
  = \abs{\alpha_{\0}}^2
  + \sum_{x \in \Z_2^n} \abs{\alpha_x}^2 (-1)^{f(x+s)+f(x+s')}.
  \label{eq:Exact}
\end{equation}

We can describe \Eq{Exact} as a linear system of equations. Define $p_{\0} \defeq \abs{\alpha_{\0}}^2$ and let $p$ be a sub-normalized probability distribution on $\Z_2^n$ defined by $p_x \defeq |\alpha_x|^2$. Let $M$ be a rectangular matrix with rows labeled by elements of $A \defeq \set{(s,s') \in \Z_2^n \times \Z_2^n : s \neq s'}$ and columns labeled by $x \in \Z_2^n$, with entries
\begin{equation}
  M_{ss'\!\!,x} \defeq (-1)^{f(x+s)+f(x+s')}.
\end{equation}
Then \Eq{Exact} is equivalent to
\begin{equation}
  M p = -p_{\0} u
  \label{eq:Exact M}
\end{equation}
where $u$ is the all-ones vector indexed by elements of $A$. In other words, there exists an exact one-query quantum algorithm for solving $\BHSP_f$ if and only if \Eq{Exact M} holds for some $p_{\0}$ and $p$ that together form a probability distribution on $\set{\0} \cup \Z_2^n$.

If $f$ is bent, there is an exact one-query quantum algorithm corresponding to $p_{\0} = 0$ and $p = \mu$, the uniform distribution (i.e., $\mu_x \defeq 1/2^n$ for all $x \in \Z_2^n$). Notice that the entries of the vector $M \mu$ are
\begin{align}
  (M\mu)_{ss'}
  &= \frac{1}{2^n} \sum_{x \in \Z_2^n} M_{ss'\!\!,x} \label{eq:Magic1} \\
  &= \frac{1}{2^n} \sum_{x \in \Z_2^n} (-1)^{f(x+s)+f(x+s')} \\
  &= \frac{1}{2^n} \sum_{x \in \Z_2^n} (-1)^{f(x)+f(x+s+s')} \\
  &= (F * F)(s + s'). \label{eq:Magic2}
\end{align}
\Prop{Bent condition} implies that $(F*F)(x) = \delta_{x,0}$, so $(Mp)_{ss'} = 0$ for all $s \ne s'$. Since $p_{\0} = 0$, \Eq{Exact M} holds and the algorithm is exact.

To prove the converse, assume there is an exact one-query quantum algorithm that solves $\BHSP_f$. Then \Eq{Exact M} holds for some $p_{\0}$ and $p$ that form a probability distribution on $\set{\0} \cup \Z_2^n$.

First, we claim that without loss of generality, the probabilities $p_x$ can be set equal for all $x \in \Z_2^n$. More precisely, we set $\bar{p} \defeq (1 - p_{\0}) \mu$ and show that \Eq{Exact M} still holds if we replace $p$ by $\bar{p}$. Note that $1 - p_{\0} = \sum_{y \in \Z_2^n} p_{x+y}$ for any $x \in \Z_2^n$, so
\begin{align}
  (M\bar{p})_{ss'}
  &= \frac{1}{2^n} \sum_{x \in \Z_2^n} M_{ss'\!\!,x} \, (1-p_{\0}) \\
  &= \frac{1}{2^n} \sum_{x \in \Z_2^n} (-1)^{f(x+s)+f(x+s')} \sum_{y \in \Z_2^n} p_{x+y} \\
  &= \frac{1}{2^n} \sum_{y \in \Z_2^n} \sum_{x \in \Z_2^n} (-1)^{f(x+y+s)+f(x+y+s')} p_x \\
  &= \frac{1}{2^n} \sum_{y \in \Z_2^n} \sum_{x \in \Z_2^n} M_{(y+s,y+s'),x} \, p_x \\
  &= \frac{1}{2^n} \sum_{y \in \Z_2^n} (Mp)_{(y+s,y+s')} \\
  &= -p_{\0}
\end{align}
where the last equality follows since $p$ is a solution of \Eq{Exact M}. We conclude that $\bar{p}$ is also a solution of \Eq{Exact M}, i.e.,
\begin{equation}
  (1-p_{\0}) M \mu = -p_{\0} u.
  \label{eq:Uniform solution}
\end{equation}

Recall from Eqs.~\EqRef{Magic1} to~\EqRef{Magic2} that $(M \mu)_{ss'} = (F*F)(s+s')$, which together with \Eq{Uniform solution} implies that $(1 - p_{\0}) (F*F)(s+s') = -p_{\0}$ for all $s \neq s'$. Clearly, there is no solution with $p_{\0} = 1$. Thus we have
\begin{equation}
  (F*F)(w) = - \frac{p_{\0}}{1-p_{\0}} \leq 0
  \label{eq:F*F}
\end{equation}
for any $w \neq 0$. Observe that $(F*F)(w) = \sum_{x \in \Z_2^n} \frac{1}{2^n} (-1)^{f(x)+f(x+w)}$ is an integer multiple of $1/2^n$ and $(F*F)(0) = 1$ for any $f$. Thus, we can rewrite \Eq{F*F} as
\begin{equation}
  (F*F)(w) =
  \begin{cases}
    1 & \text{if $w = 0$,} \\
    -k/2^n & \text{otherwise}
  \end{cases}
  \label{eq:F*F cases}
\end{equation}
for some integer $k \geq 0$. Therefore
\begin{equation}
  \sum_{w \in \Z_2^n} (F*F)(w) = 1 - \frac{2^n - 1}{2^n} k.
  \label{eq:Discrete}
\end{equation}
On the other hand,
\begin{align}
  \sum_{w \in \Z_2^n} (F*F)(w)
  &= \sum_{w \in \Z_2^n} \sum_{x \in \Z_2^n} \frac{1}{2^n} (-1)^{f(x)+f(x+w)} \\
  &= \biggl[ \frac{1}{\sqrt{2^n}} \sum_{x \in \Z_2^n} (-1)^{f(x)} \biggr]^2 \\
  &= \frac{1}{2^n} \biggl[ \, \sum_{x \in \Z_2^n} \bigl( 1 - 2 f(x) \bigr) \biggr]^2 \\
  &= \frac{1}{2^n} \bigl( 2^n - 2 \abs{f} \bigr)^2.
  \label{eq:Weight}
\end{align}
Putting this together with \Eq{Discrete} gives
\begin{equation}
  \bigl( 2^n - 2 \abs{f} \bigr)^2 = 2^n - (2^n - 1) k.
  \label{eq:|f| and k}
\end{equation}
This equation has no solutions for $k \geq 2$ since the right-hand side is negative (for $n \geq 2$). Similarly, there are no solutions for $k = 1$ since the left-hand side is even and the right-hand side is odd. Therefore $k = 0$ (and hence $p_{\0} = 0$), which implies that $f$ is bent by \Eq{F*F cases} and \Prop{Bent condition}.
\end{proof}

Note that there is a solution to \Eq{|f| and k} with $k=2$ and $n=1$, provided $\abs{f} = 1$.  This trivial case involves the one-argument Boolean functions $f(x) = x$ and $f(x) = \NOT(x)$.  For these functions which we can choose $p_{\0} = 1/2$ and $p_0 = p_1 = 1/4$ to determine the hidden shift exactly with one query.  A deterministic classical algorithm can also solve $\BHSP_f$ with one query for these functions.

\section{Success probability of one-query PGM for random functions} \label{apx:Random1}

In this appendix, we show that for one query, the expected success probability of $\PGM(f,1)$ approaches a constant less than $1$ for large $n$. This suggests that one query might not be enough to solve the problem with success probability arbitrarily close to $1$. However, we do not know if the PGM algorithm has optimal success probability in the one-query case.

\begin{theorem}\label{thm:Random1}
Let $f$ be an $n$-argument Boolean function chosen uniformly at random and suppose that a hidden shift for $f$ is chosen adversarially. Then \PGM$(f,1)$ solves $\BHSP_f$ with one query to $O_{f_{s}}$ and expected success probability $\bar{p} \geq 1/2$ over the choice of $f$. Indeed, $\bar{p} \ge 2/\pi - o(1)$ as $n \to \infty$.
\end{theorem}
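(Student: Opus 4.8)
The plan is to compute the expected value of the success probability $p_f(1)$ over a uniformly random Boolean function $f$ and show that it is bounded below by a constant approaching $2/\pi$. Recall from \Eq{PGM success probability with t=1} that the one-query success probability is
\begin{equation}
  p_f(1) = \frac{1}{2^n} \Biggl( \sum_{w \in \Z_2^n} \abs{\hat{F}(w)} \Biggr)^2.
\end{equation}
The key object is therefore $\E\bigl[ \bigl( \sum_w \abs{\hat{F}(w)} \bigr)^2 \bigr]$, so I would first apply Jensen's inequality in the form $\E[X^2] \geq (\E[X])^2$ to reduce the problem to computing the single expectation $\E\bigl[ \sum_w \abs{\hat{F}(w)} \bigr] = \sum_w \E\bigl[ \abs{\hat{F}(w)} \bigr]$. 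By symmetry, every Fourier coefficient has the same distribution under a uniformly random $f$, so this reduces to computing $2^n \cdot \E\bigl[ \abs{\hat{F}(w)} \bigr]$ for a single fixed $w$.

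Next I would analyze the distribution of a single Fourier coefficient. For fixed $w$, we have $\hat{F}(w) = \frac{1}{2^n} \sum_{x \in \Z_2^n} (-1)^{w \cdot x + f(x)}$, and since $f$ is uniformly random, the signs $(-1)^{f(x)}$ are independent uniform $\pm 1$ random variables. Thus $2^n \hat{F}(w)$ is a sum of $2^n$ independent $\pm 1$ variables (the fixed factor $(-1)^{w \cdot x}$ just permutes the signs and does not change the distribution), i.e., a centered simple random walk. By the central limit theorem, $\sqrt{2^n}\, \hat{F}(w)$ converges in distribution to a standard Gaussian, and a standard half-normal computation gives $\E\bigl[ \abs{N(0,1)} \bigr] = \sqrt{2/\pi}$. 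Hence $\E\bigl[ \abs{\hat{F}(w)} \bigr] \approx \sqrt{2/\pi} \big/ \sqrt{2^n}$ as $n \to \infty$. Multiplying by $2^n$ coefficients and then applying the Jensen bound yields
\begin{equation}
  \bar{p} \geq \frac{1}{2^n} \Biggl( 2^n \cdot \frac{1}{\sqrt{2^n}} \sqrt{\tfrac{2}{\pi}}(1 - o(1)) \Biggr)^2 = \frac{2}{\pi} - o(1),
\end{equation}
which is the asymptotic claim. For the clean bound $\bar{p} \geq 1/2$ valid for all $n$, I would look for an exact or elementary estimate of $\E\bigl[ \abs{\hat{F}(w)} \bigr]$: the relevant quantity is the mean absolute deviation of a symmetric binomial-type random walk of $2^n$ steps, which has a known closed form in terms of central binomial coefficients, and one checks the resulting inequality directly.

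The main obstacle I anticipate is making the central limit approximation rigorous enough to control the $o(1)$ error term, since $\abs{\cdot}$ is not a bounded function and convergence in distribution alone does not guarantee convergence of expectations. I would address this either by appealing to uniform integrability (the second moment $\E\bigl[ 2^n \hat{F}(w)^2 \bigr] = 1$ is bounded uniformly in $n$, which suffices) or, more concretely, by using the exact combinatorial formula for the mean absolute value of a $\pm 1$ random walk and its known Stirling asymptotics. A secondary subtlety is that the Jensen step discards the variance contribution, so the bound is only a lower bound; establishing that it is essentially tight (and that two queries are genuinely needed, as \Thm{Random2} shows) is what motivates separating this one-query result from the stronger two-query theorem.
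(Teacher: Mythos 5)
Your proposal is correct and follows essentially the same route as the paper: your Jensen step is the paper's Cauchy--Schwarz reduction, your symmetry observation is the paper's substitution $f'(x) \defeq w \cdot x + f(x)$, and your reduction to the mean absolute displacement of a $\pm 1$ random walk of $2^n$ steps, with the exact central-binomial closed form plus $\binom{2m}{m} \geq 4^m/\sqrt{4m}$ for the uniform bound $\bar{p} \geq 1/2$ and Stirling for the $2/\pi - o(1)$ asymptotics, is exactly the paper's argument. Your alternative suggestion of justifying the limit via the CLT with uniform integrability (using the bounded second moment) is a valid substitute for the Stirling computation, but the exact combinatorial formula you also name is what the paper uses and renders it unnecessary.
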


\begin{proof}
Recall from \Eq{avg t-fold coefficient} in \Lem{PGM} that \PGM$(f,t)$ recovers the hidden shift of $f$ correctly after $t$ queries with success probability $p_f(t)$. If the function $f$ is chosen uniformly at random, then the expected success probability after $t$ queries is
\begin{equation}
  \bar{p}(t)
  \defeq \frac{1}{2^{2^n}} \sum_f p_f(t)
   = \frac{1}{2^{2^n}} \sum_f \frac{1}{2^n}
     \Biggl( \sum_{w \in \Z_2^n} \FC{t}(w) \Biggr)^2.
\end{equation}
We can obtain a lower bound on $\bar{p}(t)$ 
using the Cauchy-Schwarz inequality:
\begin{equation}
  \bar{p}(t)
  \geq \frac{1}{2^n} \frac{1}{(2^{2^n})^2}
    \Biggl( \sum_f \sum_{w \in \Z_2^n} \FC{t}(w) \Biggr)^2 \!\!\!
  = 2^n
    \Biggl(
      \frac{1}{2^n} \sum_{w \in \Z_2^n}
      \frac{1}{2^{2^n}} \sum_f \FC{t}(w)
    \Biggr)^2 \!\!\! \equalscolon \tilde{p}(t).
  \label{eq:bar-p and tilde-p}
\end{equation}
Taking $t = 1$, this gives
\begin{align}
  \bar{p}
  &\geq
     \frac{1}{2^n} \frac{1}{(2^{2^n})^2}
     \Biggl( \sum_f \sum_{w \in \Z_2^n} \abs{\hat{F}(w)} \Biggr)^2 \\
  &= \frac{1}{2^n}
     \Biggl(
       \frac{1}{2^{2^n}}
       \sum_{w \in \Z_2^n} \sum_f \,
       \abs[\bigg]{\frac{1}{2^n} \sum_{x \in \Z_2^n} (-1)^{w \cdot x + f(x)}}
     \Biggr)^2.
\end{align}
For each $w$ we can define $f'(x) \defeq w \cdot x + f(x)$ and change the order of summation by summing over $f'$ instead of $f$. The value of this sum does not depend on $w$, so we get
\begin{equation}
  \bar{p}
  \geq \frac{1}{2^n}
       \left(
         \frac{1}{2^{2^n}} \sum_f \,
         \abs[\Bigg]{\sum_{x \in \Z_2^n} (-1)^{f(x)}}
       \right)^2
= \frac{L(2^n)^2}{2^n}
  \label{eq:pbar and L}
\end{equation}
where
\begin{equation}
  L(N) \defeq
    \frac{1}{2^N} \!\!
    \sum_{z \in \set{1,-1}^N} \, \abs[\Bigg]{\sum_{i=1}^N z_i}
\end{equation}
is the expected distance traveled by $N$ steps of a random walk on a line (where each step is of size one and is to the left or the right with equal probability). It remains to lower bound $L(N)$.

Let $N = 2m$ for some integer $m \geq 1$. Using standard identities for sums of binomial coefficients, we compute
\begin{align}
  L(2m)
   &= \frac{1}{2^{2m}} \cdot 2 \sum_{k=0}^m (2m-2k) \binom{2m}{k} \\
   &= \frac{1}{2^{2m}} \cdot 2 m \binom{2m}{m}.
\end{align}
Since the central binomial coefficient satisfies~\cite[p.~48]{Koshy}
\begin{equation}
  \binom{2m}{m} \geq \frac{4^m}{\sqrt{4m}},
\end{equation}
we find
\begin{equation}
  L(2m) \geq \sqrt{m}.
\end{equation}
For $N = 2^n$ this gives $L(2^n) \geq \sqrt{2^n/2}$. We plug this in \Eq{pbar and L} and get $\bar{p} \geq 1/2$. In fact, according to Stirling's formula $\binom{2m}{m} \sim 4^m / \sqrt{\pi m}$ as $m \to \infty$. This means that $L(N) \sim \sqrt{2N / \pi}$ as $N \to \infty$ and our lower bound on $\bar{p}$ approaches $2/\pi$ as $n \to \infty$.
\end{proof}

\section{Two queries suffice for random functions} \label{apx:Random functions}

In this appendix we prove the following:
\RANDOM*

\subsection{Strategy}

Our goal is lower bound $\tilde{p}(t)$, as defined in \Eq{bar-p and tilde-p}. Let us define a random variable $X$ over Boolean functions $f\colon \Z_2^n \to \Z_2$ and binary strings $w \in \Z_2^n$, whose value is
\begin{equation}
  X \defeq \bigl[ \FC{t}(w) \bigr]^2
     = \bigl[ \hat{F}^2 \bigr]^{*t} (w),
\end{equation}
where $f$ and $w$ are chosen uniformly at random. Notice from \Eq{bar-p and tilde-p} that
\begin{equation}
  \tilde{p}(t) = 2^n \bigl( \E[\sqrt{X}] \bigr)^2.
  \label{eq:tilde-p and EsqrtX}
\end{equation}
Clearly, for any $x \geq 0$ we have
\begin{equation}
  \E[\sqrt{X}] \geq \sqrt{x} \, \Pr(X \geq x).
  \label{eq:Cutoff}
\end{equation}
Our strategy is to use a one-sided version of Chebyshev's inequality, known as Cantelli's inequality, to lower-bound $\Pr(X \geq x)$, and then choose a value of $x$ that maximizes our lower bound on $\tilde{p}(t)$.

\begin{fact*}[Cantelli's inequality]\label{fact:Cantelli}
Let $\mu \defeq \E[X]$ and $\sigma^2 \defeq \E[X^2] - \mu^2$ be the mean and variance of $X$, respectively. Then $\Pr(X - \mu \geq k \sigma) \geq \frac{1}{1 + k^2}$.
\end{fact*}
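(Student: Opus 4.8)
The statement is the classical one-sided Chebyshev (Cantelli) inequality, so the plan is to give its standard self-contained derivation from a single application of Markov's inequality to a cleverly shifted and squared random variable. First I would reduce to the mean-zero case: replacing $X$ by $X - \mu$ leaves $\sigma$ unchanged, so it suffices to control $\Pr(Y \geq \lambda)$ for a mean-zero variable $Y$ with variance $\sigma^2$ and a threshold $\lambda \geq 0$ (writing $\lambda = k\sigma$ at the end). The degenerate case $\sigma = 0$ is disposed of separately, since then $Y = 0$ almost surely and the claimed bound holds trivially.

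The key step is the shift-and-square trick. For any auxiliary parameter $u \geq 0$, the event $\set{Y \geq \lambda}$ is contained in $\set{(Y+u)^2 \geq (\lambda+u)^2}$ because $Y + u \geq \lambda + u \geq 0$ on that event; applying Markov's inequality to the nonnegative variable $(Y+u)^2$ then gives
\[
  \Pr(Y \geq \lambda) \leq \frac{\E[(Y+u)^2]}{(\lambda+u)^2} = \frac{\sigma^2 + u^2}{(\lambda+u)^2}.
\]
I would then optimize the free parameter: differentiating shows the right-hand side is minimized at $u = \sigma^2/\lambda$, which collapses the bound to $\sigma^2/(\sigma^2 + \lambda^2)$, i.e., exactly $1/(1+k^2)$ after substituting $\lambda = k\sigma$. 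The same argument applied to $-Y$ yields the matching estimate on the lower tail.

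Finally I would connect this tail estimate to the form in which the Fact is actually invoked, namely a \emph{lower} bound on $\Pr(X - \mu \geq k\sigma)$ used to lower-bound $\Pr(X \geq x)$ in the main argument. This is done by passing to the complement, $\Pr(X - \mu \geq k\sigma) = 1 - \Pr(X - \mu < k\sigma)$, and bounding the complementary probability by the one-sided estimate just derived. I expect the only real subtlety to be the sign-and-direction bookkeeping: one must orient the threshold correctly so that the optimized constant emerges as $1/(1+k^2)$ (rather than its complement $k^2/(1+k^2)$), and keep the strict-versus-nonstrict inequalities and the $\sigma = 0$ boundary case consistent with the intended application. The analytic heart of the argument — the single-variable optimization over the shift $u$ — is entirely routine once the shift-and-square containment is set up.
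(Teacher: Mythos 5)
Your analytic core is correct and is the standard textbook proof: reducing to the centered variable $Y \defeq X - \mu$, the containment $\set{Y \geq \lambda} \subseteq \set{(Y+u)^2 \geq (\lambda+u)^2}$ for $u \geq 0$ (valid since $\lambda + u \geq 0$ on that event), Markov's inequality giving $\Pr(Y \geq \lambda) \leq (\sigma^2 + u^2)/(\lambda+u)^2$, and the optimizing choice $u = \sigma^2/\lambda$ collapsing the bound to $\sigma^2/(\sigma^2 + \lambda^2) = 1/(1+k^2)$. The paper states this Fact \emph{without proof}, invoking it as a classical inequality, so there is no in-paper argument to compare against; your derivation is exactly the proof one would cite.

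The one genuine issue sits in your final reconciliation step, and you half-noticed it under ``sign-and-direction bookkeeping'': the Fact as printed, $\Pr(X - \mu \geq k\sigma) \geq \frac{1}{1+k^2}$, is false for $k > 0$ (for a standard Gaussian with $k=1$ the left side is $\approx 0.159 < 1/2$); the inequality sign is a typo for $\leq$, i.e., precisely the upper-tail bound you proved. This reading is confirmed by the paper's next line, which ``substitutes $X$ by $-X$ and reverses the inequality'' to obtain $\Pr(X \geq \mu - k\sigma) \geq \frac{k^2}{1+k^2}$ --- something that only makes sense starting from the $\leq$ version. Consequently your proposed route --- writing $\Pr(X - \mu \geq k\sigma) = 1 - \Pr(X - \mu < k\sigma)$ and bounding the complement --- cannot succeed: it would require $\Pr(X - \mu < k\sigma) \leq \frac{k^2}{1+k^2}$, which is also false in general. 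The correct bookkeeping is to apply your upper-tail bound to $-X$ (same variance, mean $-\mu$), giving $\Pr(X \leq \mu - k\sigma) \leq \frac{1}{1+k^2}$, and then complement; note that the threshold moves to $\mu - k\sigma$ and the constant that emerges is $\frac{k^2}{1+k^2}$, not $\frac{1}{1+k^2}$. That is exactly the form consumed downstream: in \Eq{bar-p bound} the cutoff is $x = \mu - k\sigma$ and the factor $\bigl(1 + \frac{1}{k^2}\bigr)^{-2} = \bigl(\frac{k^2}{1+k^2}\bigr)^2$ comes from squaring this probability. So your proof is right for the intended statement, but to splice it in you should fix the Fact's inequality direction to $\leq$ and route the lower bound through $-X$ rather than through the complement at the same threshold.
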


Alternatively, if we substitute $X$ by $-X$ and reverse the inequality then
\begin{equation}
  \Pr(X \geq \mu - k \sigma) \geq \frac{k^2}{1 + k^2}.
\end{equation}
If we substitute $x \defeq \mu - k \sigma$ in \Eq{Cutoff}, then according to the above inequality,
\begin{equation}
  \E[\sqrt{X}] \geq \sqrt{\mu - k \sigma} \, \frac{k^2}{1 + k^2}.
  \label{eq:EsqrtX}
\end{equation}
Using \Eq{bar-p and tilde-p}, \Eq{tilde-p and EsqrtX}, and \Eq{EsqrtX} gives
\begin{equation}
  \bar{p}(t)
   \geq \tilde{p}(t)
      = 2^n \bigl( \E[\sqrt{X}] \bigr)^2
   \geq 2^n (\mu - k \sigma) \biggl( 1 + \frac{1}{k^2} \biggr)^{-2}.
  \label{eq:bar-p bound}
\end{equation}
It remains to lower bound $\mu$ (\Sect{Mean}), upper bound $\sigma$ (\Sect{Variance}), and make a reasonable choice of the deviation parameter $k$ (\Sect{Deviation}).

\subsection{Computing the mean} \label{sect:Mean}

Let us compute the mean
\begin{equation}
  \mu = \E[X]
      = \frac{1}{2^{2^n}} \sum_f \frac{1}{2^n} \sum_{w \in \Z_2^n}
        \bigl[ \hat{F}^2 \bigr]^{*t} (w)
\end{equation}
for any integer $t \geq 1$. Notice that
\begin{align}
   \sum_{w \in \Z_2^n} \bigl[ \hat{F}^2 \bigr]^{*t} (w) \;
&= \sum_{w,y_1, \dotsc, y_{t-1} \in \Z_2^n}
   \hat{F}(y_1)^2 \dotsb \hat{F}(y_{t-1})^2 \hat{F} \bigl( w - (y_1 + \dotsb + y_{t-1}) \bigr)^2 \\
&= \;\;\;\sum_{y_1, \dotsc, y_t \in \Z_2^n}
   \hat{F}(y_1)^2 \dotsb \hat{F}(y_{t-1})^2 \hat{F}(y_t)^2 \\
&= \Biggl( \sum_{y \in \Z_2^n} \hat{F}(y)^2 \Biggr)^t \\
&= 1
\end{align}
by unitarity of the Fourier transform (see Plancherel's identity in \Sect{QFT and convolution}). We conclude that
\begin{equation}
  \mu = \frac{1}{2^n}
  \label{eq:mu}
\end{equation}
independent of $t$.

\subsection{Computing the variance} \label{sect:Variance}

Next we compute the variance
\begin{equation}
  \E[X^2] = \frac{1}{2^{2^n}} \sum_f \frac{1}{2^n} \sum_{w \in \Z_2^n}
            \Bigl( \bigl[ \hat{F}^2 \bigr]^{*t} (w) \Bigr)^2.
  \label{eq:X2 initial}
\end{equation}
Note that from \Eq{FS norm} and Plancherel identity we have
\begin{equation}
    \sum_{w \in \Z_2^n} \Bigl( \bigl[ \hat{F}^2 \bigr]^{*t} (w) \Bigr)^2
  = \sum_{w \in \Z_2^n}
      \Biggl( \frac{1}{\sqrt{2^n}}
        \widehat{\:(F*F)^t\:} (w)
      \Biggr)^2
  = \frac{1}{2^n} \sum_{w \in \Z_2^n} (F*F)^{2t} (w).
\end{equation}
We substitute this in \Eq{X2 initial} and get
\begin{align}
  \E[X^2]
  &= \frac{1}{2^{2^n}} \sum_f \frac{1}{2^n}
     \Biggl( \frac{1}{2^n} \sum_{w \in \Z_2^n} (F*F)^{2t} (w) \Biggr) \\
  &= \frac{1}{2^{2n}} \sum_{w \in \Z_2^n}
     \frac{1}{2^{2^n}} \sum_f \Biggl( \frac{1}{2^n} \sum_{x \in \Z_2^n} (-1)^{f(x) + f(w+x)} \Biggr)^{2t}.
  \label{eq:X2}
\end{align}

\subsubsection{Counting pairings} \label{sect:Pairings}

Let us introduce some combinatorial ideas that will help us to evaluate the sum in \Eq{X2}.
\begin{definition}
Let $S$ be a finite set and let $l \geq 1$ be an integer. We say that $a_1, a_2, \dotsc, a_{2l} \in S$ are \emph{paired} if there exists a permutation $\pi$ of $\set{1, 2, \dotsc, 2l}$ such that $a_{\pi(2i-1)} = a_{\pi(2i)}$ for all $i \in \set{1, 2, \dotsc, l}$. Define $\Delta\colon S^{2l} \to \Z_2$ as
\begin{equation}
  \Delta(a_1, a_2, \dotsc, a_{2l}) \defeq
  \begin{cases}
    1 & \text{if $a_1, a_2, \dotsc, a_{2l}$ are paired}, \\
    0 & \text{otherwise}.
  \end{cases}
\end{equation}
\end{definition}

Notice that for $l=2$ we have
$\Delta(a,b,c,d)
  = \delta_{a,b} \delta_{c,d}
  + \delta_{a,c} \delta_{b,d}
  + \delta_{a,d} \delta_{b,c}
  - 2 \delta_{a,b,c,d}$,
so the number of ways to pair four elements of $S$ is
\begin{equation}
  \sum_{a,b,c,d \in S} \!\! \Delta(a,b,c,d)
  \;=\; 3 \!\! \sum_{a,b,c,d \in S} \!\! \delta_{a,b} \delta_{c,d}
  \;-\; 2 \!\! \sum_{a,b,c,d \in S} \!\! \delta_{a,b,c,d}
  \;=\; 3 \abs{S}^2 - 2 \abs{S}.
  \label{eq:sum abcd}
\end{equation}

\begin{proposition}
Let $S = \set{0,1}^n$. Then
for any $a_1, a_2, \dotsc, a_{2l} \in S$,
\begin{equation}
  \frac{1}{2^{2^n}} \sum_f (-1)^{f(a_1) + f(a_2) + \dotsb + f(a_{2l})}
  = \Delta(a_1, a_2, \dotsc, a_{2l})
  \label{eq:Boolean pairings}
\end{equation}
where the sum is over all Boolean functions $f\colon \Z_2^n \to \Z_2$.
\end{proposition}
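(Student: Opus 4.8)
The plan is to evaluate the average $\frac{1}{2^{2^n}}\sum_f (-1)^{f(a_1)+\dotsb+f(a_{2l})}$ directly by exploiting the independence of the values of $f$ on distinct inputs. Since a uniformly random Boolean function $f\colon \Z_2^n \to \Z_2$ assigns independent uniform bits $f(a)$ to each $a \in S$, the key idea is to group the arguments $a_1, \dotsc, a_{2l}$ according to which distinct element of $S$ they equal and count how many times each distinct value occurs. First I would rewrite the sign $(-1)^{\sum_j f(a_j)}$ by collecting, for each distinct point $a \in S$ appearing among the $a_j$, the total exponent, which is $m_a \cdot f(a)$ where $m_a$ is the multiplicity of $a$ in the list. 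Thus the sign factors as a product over distinct points: $(-1)^{\sum_{a} m_a f(a)} = \prod_{a} (-1)^{m_a f(a)}$.

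Next I would use the fact that summing over all $f$ with the uniform measure factors into an independent expectation over each $f(a) \in \{0,1\}$. Concretely, $\frac{1}{2^{2^n}}\sum_f \prod_{a} (-1)^{m_a f(a)} = \prod_{a \in S} \Bigl( \frac{1}{2}\sum_{b \in \{0,1\}} (-1)^{m_a b} \Bigr)$, where the product is over all $a \in S$ (points not appearing have $m_a = 0$ and contribute a factor of $1$). Each single-point average equals $\frac{1}{2}(1 + (-1)^{m_a})$, which is $1$ when $m_a$ is even and $0$ when $m_a$ is odd. Hence the whole product is $1$ precisely when every multiplicity $m_a$ is even, and $0$ otherwise.

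Finally I would observe that the condition ``every multiplicity $m_a$ is even'' is exactly the definition of $a_1, \dotsc, a_{2l}$ being \emph{paired}: if every value occurs an even number of times, we can match up the equal occurrences into $l$ pairs via a suitable permutation $\pi$, giving $a_{\pi(2i-1)} = a_{\pi(2i)}$; conversely, a pairing forces each multiplicity to be even. Therefore the product equals $\Delta(a_1, \dotsc, a_{2l})$, which is the claimed identity \EqRef{Boolean pairings}.

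I expect the only genuinely delicate step to be the combinatorial equivalence between ``all multiplicities even'' and ``paired'' in the final paragraph; the factorization of the expectation over independent coordinates is routine, but one must argue carefully that an even-multiplicity configuration admits a valid pairing permutation. This is straightforward since an even multiplicity $m_a = 2k_a$ lets us pair the occurrences of $a$ among themselves, and assembling these local pairings across all distinct values yields the global permutation $\pi$.
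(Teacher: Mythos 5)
Your proof is correct and uses essentially the same idea as the paper: the values of a uniformly random $f$ at distinct points are independent, so the average vanishes unless every point appears with even multiplicity, which is equivalent to the arguments being paired. The paper states this tersely (cancel paired arguments, then invoke independence on the remaining distinct points), while you make the same argument explicit by factoring the average into a product of single-point terms $\frac{1}{2}\bigl(1+(-1)^{m_a}\bigr)$; this is a cleaner formalization, not a different route.
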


\begin{proof}
Clearly, if $a_1, a_2, \dotsc, a_{2l}$ are paired, then the exponent of $-1$ is even and the sum is $1$. Otherwise, we can omit the paired arguments, and all remaining $a_i$ are distinct. Since we are averaging over all $f$ and the values that $f$ takes at distinct points are independent, the sum vanishes.
\end{proof}

We can use this observation to rewrite \Eq{X2} as follows:
\begin{equation}
  \E[X^2]
   = \frac{1}{2^{2(t+1)n}} \sum_{w \in \Z_2^n}
     \sum_{a_1, \dotsc, a_{2t} \in \Z_2^n}
     \Delta(a_1, a_1 + w, a_2, a_2 + w, \dotsc, a_{2l}, a_{2l} + w).
  \label{eq:X2 combinatorial}
\end{equation}

\subsubsection{Evaluating the variance at \texorpdfstring{$t=2$}{t=2}} \label{sect:Variance t=2}

In general, the variance depends on $t$. However, we are interested only in the $t = 2$ case, so from now on we will assume that $t = 2$ and do not write the dependence on $t$ explicitly.  For $t=2$, \Eq{X2 combinatorial} reads
\begin{equation}
  \E[X^2]
  = \frac{1}{2^{6n}} \sum_{w \in \Z_2^n}
    \sum_{a,b,c,d \in \Z_2^n}
    \Delta(a, a + w, b, b + w, c, c + w, d, d + w).
  \label{eq:EX2}
\end{equation}

We consider two cases. First, when $w = 0$, the eight arguments of $\Delta$ are always paired, so the inner sum in \Eq{EX2} evaluates to
\begin{equation}
  \sum_{a,b,c,d \in \Z_2^n}
  \Delta(a, a, b, b, c, c, d, d)
  = 2^{4n}.
  \label{eq:w=0}
\end{equation}

Now suppose $w \neq 0$. Then $w_i = 1$ for some $i \in \set{1, \dotsc, n}$ and thus either $a_i = 0$ or $a_i + w_i = 0$ (and similarly for $b$, $c$, and $d$). In total there are $2^4 = 16$ cases. Since $\Delta$ is invariant under permutations of arguments, we can substitute $a$ by $a + w$, which effectively swaps the arguments $a$ and $a + w$. By performing a similar operation for $b$, $c$, and $d$, we can ensure that $a_i = b_i = c_i = d_i = 0$. Among the eight arguments of $\Delta$ in \Eq{EX2}, arguments $a$, $b$, $c$, and $d$ can be paired only among themselves since $w_i = 1$. Moreover, once $a$ and $b$ are paired, then so are $a + w$ and $b + w$. Thus, we can restrict the $i$th bit of $w$ to be $1$ and ignore the four extra arguments of $\Delta$. Then the inner sum in \Eq{EX2} becomes
\begin{align}
  16 \sum_{a,b,c,d \in \Z_2^{n-1}} \Delta(a, b, c, d)
= 16 \cdot \bigl( 3 \cdot 2^{2n-2} - 2 \cdot 2^{n-1} \bigr)
= 12 \cdot 2^{2n} - 16 \cdot 2^n,
  \label{eq:w!=0}
\end{align}
where the first equality follows from \Eq{sum abcd} with $S = \Z_2^{n-1}$.

By combining \Eq{w=0} and \Eq{w!=0}, we can rewrite \Eq{EX2} as
\begin{align}
  \E[X^2]
 &= \frac{1}{2^{6n}} \biggl(
      2^{4n} + (2^n - 1) \cdot (12 \cdot 2^{2n} - 16 \cdot 2^n)
    \biggr) \\
 &= \frac{1}{2^{2n}}
 + \frac{12}{2^{3n}}
 - \frac{28}{2^{4n}}
 + \frac{16}{2^{5n}}.
\end{align}
Using the value of $\mu$ from \Eq{mu}, we see that for $n \geq 1$ the variance is
\begin{equation}
   \sigma^2
   = \E[X^2] - \mu^2
   = \frac{12}{2^{3n}}
   - \frac{28}{2^{4n}}
   + \frac{16}{2^{5n}}
\geq \frac{1}{2^{3n}}. \label{eq:sigma2}
\end{equation}

\subsection{Choosing the deviation} \label{sect:Deviation}

To complete the lower bound on the success probability, recall from \Eq{bar-p bound} that
\begin{equation}
  \bar{p} \geq 2^n (\mu - k \sigma) \biggl( 1 + \frac{1}{k^2} \biggr)^{-2}.
\end{equation}
Substituting the bounds on $\mu$ and $\sigma$ from \Eq{mu} and \Eq{sigma2}, respectively, gives
\begin{equation}
  \bar{p}
  \geq \biggl( 1 - \frac{k}{\sqrt{2^n}} \biggr)
       \biggl( 1 + \frac{1}{k^2} \biggr)^{-2}.
\end{equation}
Notice that $\left( 1 + \frac{1}{k^2} \right)^{-2} \geq 1 - \frac{2}{k^2}$ for any $k$, so
\begin{equation}
  \bar{p}
  \geq \biggl( 1 - \frac{k}{\sqrt{2^n}} \biggr)
       \biggl( 1 - \frac{2}{k^2} \biggr)
  \geq 1 - \frac{k}{\sqrt{2^n}} - \frac{2}{k^2}.
\end{equation}

It remains to make a good choice for $k$. Let $\alpha = \sqrt{2^n}$ and $k = \alpha^c$ for some $c > 0$. Then
\begin{align}
  \bar{p} \geq 1 - \alpha^{c-1} - 2 \alpha^{-2c}.
\end{align}
Choosing $c = 1/3$ (i.e., $k = 2^{n/6}$) gives
\begin{equation}
  \bar{p} \geq 1 - \frac{3}{64} \cdot 2^{-n}.
\end{equation}
This concludes the proof of \Thm{Random2}.

\section{Zeroes in the Fourier spectrum} \label{apx:Zeroes}

\subsection{Undetectable shifts and anti-shifts} \label{sect:b-shifts}

In some cases the Boolean hidden shift problem cannot be solved exactly in principle. For example, if the function $f$ is invariant under some shift, then the hidden shift cannot be uniquely determined, as the oracle does not contain enough information (an extreme case of this is a constant function which is invariant under all shifts). In this section we consider such degenerate functions and analyze their Fourier spectra.

\begin{definition}
Let $b \in \Z_2$. We say that $\bv{s}$ is a \emph{$b$-shift} for a function $f\colon \Z_2^n \to \Z_2$ if $f$ has the following property: $\forall \bv{x} \in \Z_2^n\colon f(\bv{x} + \bv{s}) = f(\bv{x}) + b$. We refer to $0$-shifts as \emph{undetectable shifts} since they cannot be distinguished from the trivial shift $\bv{s} = 0$. We also refer to $1$-shifts as \emph{anti-shifts} since they negate the truth table of $f$.
\end{definition}

The following result provides an alternative characterization of $b$-shifts. It relates the maximal and minimal autocorrelation value of $F$ to undetectable shifts and anti-shifts of $f$, respectively (see \Def{Convolution} for the definition of convolution).

\begin{proposition}
The string $\bv{s} \in \Z_2^n$ is a $b$-shift for function $f\colon \Z_2^n \to \Z_2$ if and only if $(F*F)(\bv{s}) = (-1)^b$, where $F(\bv{x})\defeq(-1)^{f(\bv{x})} / \sqrt{2^n}$ for all $\bv{x} \in \Z_2^n$.
\end{proposition}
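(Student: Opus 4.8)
The plan is to prove the biconditional by a direct computation of the autocorrelation $(F*F)(\bv{s})$ and relating its value to the defining property of a $b$-shift. The key observation is that $b$-shifts are exactly the strings $\bv{s}$ for which $f(\bv{x}+\bv{s})+f(\bv{x})$ is a constant function of $\bv{x}$, and the value of this constant is encoded as a sign in the autocorrelation.

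First I would expand the autocorrelation using the definition of $F$ and of convolution. Since $F(\bv{x}) = (-1)^{f(\bv{x})}/\sqrt{2^n}$, we have
\begin{equation}
  (F*F)(\bv{s})
  = \sum_{\bv{x} \in \Z_2^n} F(\bv{x}) F(\bv{s}-\bv{x})
  = \frac{1}{2^n} \sum_{\bv{x} \in \Z_2^n} (-1)^{f(\bv{x}) + f(\bv{s}-\bv{x})}.
\end{equation}
Because we work over $\Z_2^n$, addition and subtraction coincide, so $\bv{s}-\bv{x} = \bv{s}+\bv{x}$ and the exponent becomes $f(\bv{x}) + f(\bv{x}+\bv{s})$. (This matches the computation of $(M\mu)_{ss'} = (F*F)(s+s')$ carried out in Eqs.~\EqRef{Magic1}--\EqRef{Magic2} of the bent-function proof.) Thus
\begin{equation}
  (F*F)(\bv{s}) = \frac{1}{2^n} \sum_{\bv{x} \in \Z_2^n} (-1)^{f(\bv{x}) + f(\bv{x}+\bv{s})}.
\end{equation}

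For the forward direction, suppose $\bv{s}$ is a $b$-shift, so $f(\bv{x}+\bv{s}) = f(\bv{x}) + b$ for all $\bv{x}$. Then each summand equals $(-1)^{2f(\bv{x}) + b} = (-1)^b$, and averaging over the $2^n$ values of $\bv{x}$ gives $(F*F)(\bv{s}) = (-1)^b$, as claimed. For the converse, I would argue from the extremal nature of the sum: each term $(-1)^{f(\bv{x}) + f(\bv{x}+\bv{s})}$ is $\pm 1$, so the average is a value in $[-1,1]$ that attains $+1$ only when every term is $+1$ and attains $-1$ only when every term is $-1$. If $(F*F)(\bv{s}) = (-1)^0 = 1$, then $f(\bv{x}) + f(\bv{x}+\bv{s}) \equiv 0 \pmod 2$ for all $\bv{x}$, i.e., $\bv{s}$ is a $0$-shift; if $(F*F)(\bv{s}) = (-1)^1 = -1$, then $f(\bv{x}) + f(\bv{x}+\bv{s}) \equiv 1 \pmod 2$ for all $\bv{x}$, i.e., $\bv{s}$ is a $1$-shift. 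In either case $\bv{s}$ is a $b$-shift with the matching $b$.

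This argument is essentially routine, so there is no serious obstacle; the only point requiring a little care is the converse, where one must invoke that a mean of $\pm 1$ values equals $\pm 1$ precisely when all the values agree, rather than merely that the parities match on average. I would state this explicitly to make the equivalence airtight, noting that the extremal case of the triangle inequality forces every summand to take the common value $(-1)^b$.
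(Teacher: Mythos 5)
Your proof is correct and follows essentially the same route as the paper's: a direct expansion of the autocorrelation for the forward direction, and the saturation argument (a mean of $\pm 1$ terms attains $\pm 1$ only when all terms agree) for the converse. Your explicit case split on $b$ in the converse is a slightly more careful rendering of the same idea, since the paper first concludes only that $\bv{s}$ is a $b$-shift for \emph{some} $b$ and implicitly matches it via the forward direction.
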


\begin{proof}
Let $\bv{s}$ be a $b$-shift of $f$. Then
\begin{align}
  (F*F)(\bv{s})
  &=               \sum_{\bv{x} \in \Z_2^n} F(\bv{x}) F(\bv{x} + \bv{s}) \label{eq:Self correlation} \\
  &= \frac{1}{2^n} \sum_{\bv{x} \in \Z_2^n} (-1)^{f(\bv{x})} (-1)^{f(\bv{x}) + b} \\
  &= \frac{1}{2^n} \sum_{\bv{x} \in \Z_2^n} (-1)^b \\
  &= (-1)^b.
\end{align}

For the converse, note that all terms on the right-hand side of \Eq{Self correlation} have absolute value equal to $1/2^n$. In total there are $2^n$ terms, so $\abs{(F*F)(\bv{s})} \leq 1$. If this bound is saturated, then all terms in \Eq{Self correlation} must have the same phase. Thus, $\bv{s}$ is a $b$-shift for some $b \in \Z_2$.
\end{proof}

If $\bv{s}'$ and $\bv{s}''$ are undetectable shifts of $f$ then so is $\bv{s}' + \bv{s}''$, since $f(\bv{x} + \bv{s}' + \bv{s}'') = f(\bv{x} + \bv{s}') = f(\bv{x})$ for any $\bv{x}$. Hence the set of all undetectable shifts forms a linear subspace of $\Z_2^n$. Also, if $\bv{a}'$ and $\bv{a}''$ are anti-shifts, then $\bv{a}' + \bv{a}''$ is an undetectable shift. In particular, a Boolean function with no undetectable shifts can have at most one anti-shift.

If we want to solve the hidden shift problem for a function $f$ that has an undetectable shift $s$, we can apply an invertible linear transformation $A$ on the input variables such that $A \cdot 0 \dots 01 = s$. Thus we simulate the oracle for the function $f'(x) \defeq f(A \cdot x)$ such that $f'(x + 0 \dots 01) = f'(x)$. Notice that $f'$ is effectively an $(n-1)$-argument function, since it does not depend on the last argument. Similarly, if $f$ has a $k$-dimensional subspace of undetectable shifts, it is effectively an $(n-k)$-argument function. Solving the hidden shift problem for such a function is equivalent to solving it for the reduced $(n-k)$-argument function $f'$ and picking arbitrary values for the remaining $k$ arguments. In this sense, Boolean functions with undetectable shifts are degenerate and we can consider only functions with no undetectable shifts without loss of generality.

Similarly, if $f$ has an anti-shift, we can use the same construction to show that it is equivalent to a function $f'$ such that $f'(x_1, \dotsc, x_{n-1}, x_n) = f''(x_1, \dotsc, x_{n-1}) \op x_n$ where $f''$ is an $(n-1)$-argument function. To solve the hidden shift problem for $f'$, we first solve it for $f''$ and then learn the value of the remaining argument $x_n$ via a single query. In this sense, Boolean functions with anti-shifts are also degenerate. Thus, without loss of generality we can consider the hidden shift problem only for \emph{non-degenerate} functions, i.e., ones that have no $b$-shifts for any $b \in \Z_2$.

Finally, let us show that Boolean functions with $b$-shifts have at least half of their Fourier coefficients equal to zero. Let $\mc{S}$ be an $(n-1)$-dimensional subspace of $\Z_2^n$, and let us denote the two cosets of $\mc{S}$ in $\Z_2^n$ by $\mc{S}_b \defeq \mc{S} + b \bv{r}$, where $b \in \Z_2$ and $\bv{r} \in \Z_2^n \setminus \mc{S}$ is any representative of the coset for $b=1$. The following result relates the property of having a $b$-shift to the property of having zero Fourier coefficients with special structure.

\begin{lemma}\label{lem:b-shifts}
A function $f\colon \Z_2^n \to \Z_2$ has a non-zero $b$-shift if and only if there is an $(n-1)$-dimensional subspace $\mc{S} \subset \Z_2^n$ such that $\hat{F}(\bv{w}) = 0$ when $\bv{w} \notin \mc{S}_b$.
\end{lemma}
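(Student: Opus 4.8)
The plan is to prove both directions by translating the $b$-shift condition into a statement about the autocorrelation $(F*F)$, then relating that autocorrelation to the support of the Fourier spectrum $\hat{F}$ via the identities from \Sect{QFT and convolution}. Recall that by the earlier proposition, $\bv{s}$ is a $b$-shift for $f$ if and only if $(F*F)(\bv{s}) = (-1)^b$. The key observation I would exploit is that the Fourier transform of the autocorrelation is the squared spectrum: $\widehat{F*F} = \sqrt{2^n}\,\hat{F}^2$ (property~5 of the Fact, solved for $\widehat{F*F}$). So the presence of a $b$-shift puts a symmetry constraint on $(F*F)$, which under Fourier transform becomes a constraint forcing $\hat{F}^2$—and hence $\hat{F}$—to vanish on a coset-complement.

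\emph{First I would prove the forward direction.} Suppose $\bv{s}\neq 0$ is a $b$-shift. Then $f(\bv{x}+\bv{s}) = f(\bv{x})+b$ for all $\bv{x}$, which immediately gives the pointwise relation $F(\bv{x}+\bv{s}) = (-1)^b F(\bv{x})$. I would compute the effect on the Fourier transform directly: shifting $F$ by $\bv{s}$ multiplies $\hat{F}(\bv{w})$ by $(-1)^{\bv{w}\cdot \bv{s}}$, so the relation $F(\cdot+\bv{s}) = (-1)^b F(\cdot)$ forces $(-1)^{\bv{w}\cdot\bv{s}}\hat{F}(\bv{w}) = (-1)^b \hat{F}(\bv{w})$ for every $\bv{w}$. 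Hence $\hat{F}(\bv{w}) = 0$ whenever $(-1)^{\bv{w}\cdot\bv{s}} \neq (-1)^b$, i.e.\ whenever $\bv{w}\cdot\bv{s} \neq b \pmod 2$. I would then take $\mc{S} \defeq \set{\bv{w} : \bv{w}\cdot\bv{s} = 0}$, the $(n-1)$-dimensional subspace orthogonal to $\bv{s}$ (it has dimension $n-1$ precisely because $\bv{s}\neq 0$). The condition $\bv{w}\cdot\bv{s} = b$ is exactly the condition $\bv{w}\in\mc{S}_b$ in the notation $\mc{S}_b = \mc{S} + b\bv{r}$ with $\bv{r}$ any representative satisfying $\bv{r}\cdot\bv{s}=1$. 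So $\hat{F}(\bv{w}) = 0$ for $\bv{w}\notin\mc{S}_b$, as required.

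\emph{Next I would prove the converse.} Suppose there is an $(n-1)$-dimensional subspace $\mc{S}$ with $\hat{F}(\bv{w}) = 0$ for $\bv{w}\notin\mc{S}_b$. Since $\mc{S}$ has codimension $1$, it is the orthogonal complement of a unique nonzero vector $\bv{s}$, i.e.\ $\mc{S} = \set{\bv{w}:\bv{w}\cdot\bv{s}=0}$, and the coset $\mc{S}_b$ is $\set{\bv{w}:\bv{w}\cdot\bv{s}=b}$. The hypothesis then says $\hat{F}(\bv{w})\neq 0$ implies $\bv{w}\cdot\bv{s}=b$, so for every $\bv{w}$ we have $(-1)^{\bv{w}\cdot\bv{s}}\hat{F}(\bv{w}) = (-1)^b\hat{F}(\bv{w})$. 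Applying the inverse Fourier transform (self-inverse, property~2) recovers $F(\bv{x}+\bv{s}) = (-1)^b F(\bv{x})$ for all $\bv{x}$, which is precisely the statement that $\bv{s}\neq 0$ is a $b$-shift. This closes the equivalence.

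\emph{The main obstacle} I anticipate is purely bookkeeping: making the coset-versus-inner-product correspondence watertight. One must verify that an arbitrary codimension-$1$ subspace $\mc{S}$ really is the orthogonal complement of a \emph{unique} nonzero $\bv{s}$ (true over $\Z_2^n$ because the pairing $\bv{w}\cdot\bv{s}$ is nondegenerate), and that the coset label $b$ in $\mc{S}_b = \mc{S}+b\bv{r}$ matches the parity $\bv{w}\cdot\bv{s}=b$ under the chosen representative $\bv{r}$. The analytic content—translating shifts into phase factors on $\hat{F}$—is routine given property~5 and the self-inverse property; the only care needed is tracking that $b=0$ corresponds to the subspace $\mc{S}$ itself (undetectable shift) while $b=1$ corresponds to the nontrivial coset (anti-shift), matching the maximal/minimal autocorrelation dichotomy noted earlier.
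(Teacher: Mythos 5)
Your proposal is correct and takes essentially the same approach as the paper: in the forward direction you derive $(-1)^{\bv{w}\cdot\bv{s}}\hat{F}(\bv{w}) = (-1)^b \hat{F}(\bv{w})$ (invoking the shift-phase identity where the paper writes out the change of variables $\bv{x} \mapsto \bv{x}+\bv{s}$ explicitly), and in the converse you apply the self-inverse Fourier transform to recover $F(\bv{x}+\bv{s}) = (-1)^b F(\bv{x})$, identifying $\mc{S}$ as the orthogonal complement of the unique non-zero $\bv{s}$ exactly as the paper does. The autocorrelation detour via $\widehat{F*F}$ sketched in your opening paragraph never actually enters your executed argument and is not needed.
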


\begin{proof}
Assume that $\bv{s}$ is a $b$-shift of $f$. Then
\begin{align}
  \hat{F}(\bv{w})
  &= \frac{1}{2^n} \sum_{\bv{x} \in \Z_2^n} (-1)^{\bv{w} \cdot \bv{x} + f(\bv{x})} \\
  &= \frac{1}{2^n} \sum_{\bv{x} \in \Z_2^n} (-1)^{\bv{w} \cdot (\bv{x}+\bv{s}) + f(\bv{x}+\bv{s})} \\
  &= \frac{1}{2^n} \sum_{\bv{x} \in \Z_2^n} (-1)^{\bv{w} \cdot (\bv{x}+\bv{s}) + f(\bv{x}) + b} \\
  &= (-1)^{\bv{w} \cdot \bv{s} + b} \frac{1}{2^n} \sum_{\bv{x} \in \Z_2^n}
     (-1)^{\bv{w} \cdot \bv{x} + f(\bv{x})} \\
  &= (-1)^{\bv{w} \cdot \bv{s} + b} \hat{F}(\bv{w}).
\end{align}
Thus, $\hat{F}(\bv{w}) = 0$ when $\bv{w} \cdot \bv{s} \neq b$. Let $\mc{S}$ be the $(n-1)$-dimensional subspace of $\Z_2^n$ orthogonal to $\bv{s}$. Then $\bv{w} \in \mc{S}_b \Leftrightarrow \bv{w} \cdot \bv{s} = b$ and thus $\hat{F}(\bv{w}) = 0$ when $\bv{w} \notin \mc{S}_b$.

For the converse, assume that $\mc{S}$ is an $(n-1)$-dimensional subspace of $\Z_2^n$ and $\hat{F}(\bv{w}) = 0$ when $\bv{w} \notin \mc{S}_b$. Let $\bv{s} \in \Z_2^n$ be the unique non-zero vector orthogonal to $\mc{S}$. Then $\mc{S}_b = \set{\bv{w} \colon \bv{w} \cdot \bv{s} = b}$ and we have
\begin{align}
  F(\bv{x} + \bv{s})
  &= \hat{\hat{F}}(\bv{x} + \bv{s}) \\
  &= \frac{1}{\sqrt{2^n}} \sum_{\bv{w} \in \Z_2^n} (-1)^{(\bv{x}+\bv{s}) \cdot \bv{w}} \hat{F}(\bv{w}) \\
  &= \frac{1}{\sqrt{2^n}} \sum_{\bv{w} \in \mc{S}_b} (-1)^{(\bv{x}+\bv{s}) \cdot \bv{w}} \hat{F}(\bv{w}) \\
  &= (-1)^b \frac{1}{\sqrt{2^n}} \sum_{\bv{w} \in \mc{S}_b} (-1)^{\bv{x} \cdot \bv{w}} \hat{F}(\bv{w}) \\
  &= (-1)^b F(\bv{x}).
\end{align}
Hence $f(\bv{x} + \bv{s}) = f(\bv{x}) + b$ and thus $\bv{s}$ is a $b$-shift of $f$.
\end{proof}

\subsection{Decision trees} \label{sect:Trees}

In the previous section we discussed degenerate cases of Boolean functions that have many zero Fourier coefficients. In this section we explain how to construct non-degenerate examples.

\begin{lemma}\label{lem:Tree}
If $f$ is a Boolean function defined by a decision tree of height $h$ then $\hat{F}(w) = 0$ when $\abs{w} > h$.
\end{lemma}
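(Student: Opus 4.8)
The plan is to expand $\hat{F}(w)$ as a sum over the leaves of the decision tree and argue that each leaf contributes zero as soon as $|w| > h$. Recall from \Def{FourierZd} that $\hat{F}(w) = \frac{1}{2^n}\sum_{x \in \Z_2^n}(-1)^{w\cdot x + f(x)}$. Since the tree is deterministic, every input $x$ follows a unique root-to-leaf path, so the leaves partition $\Z_2^n$: for a leaf $\ell$, let $P_\ell \subseteq \set{1,\dots,n}$ be the set of variables queried along its path, with prescribed values $b_i \in \Z_2$ for $i \in P_\ell$, and let $v_\ell \in \Z_2$ be the stored value. By the convention stated earlier that no argument repeats on any path, $\abs{P_\ell}$ equals the length of the path, hence $\abs{P_\ell} \leq h$. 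The set of inputs reaching $\ell$ is exactly $\set{x : x_i = b_i\ \forall i \in P_\ell}$, so
\begin{equation}
  \hat{F}(w) = \frac{1}{2^n}\sum_{\ell} (-1)^{v_\ell}
  \sum_{\substack{x \in \Z_2^n\\ x_i = b_i\,\forall i \in P_\ell}} (-1)^{w\cdot x}.
\end{equation}

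First I would evaluate the inner sum. Splitting $w\cdot x = \sum_{i \in P_\ell} w_i b_i + \sum_{j \notin P_\ell} w_j x_j$, the fixed part contributes an overall sign while the free coordinates $j \notin P_\ell$ factorize, giving
\begin{equation}
  \sum_{\substack{x \in \Z_2^n\\ x_i = b_i\,\forall i \in P_\ell}} (-1)^{w\cdot x}
  = (-1)^{\sum_{i \in P_\ell} w_i b_i} \prod_{j \notin P_\ell}\bigl(1 + (-1)^{w_j}\bigr),
\end{equation}
and each factor $1 + (-1)^{w_j}$ equals $2$ if $w_j = 0$ and $0$ if $w_j = 1$. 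Hence the inner sum vanishes unless $w_j = 0$ for every free coordinate, i.e.\ unless $\operatorname{supp}(w) \subseteq P_\ell$.

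To finish, observe that if $\abs{w} > h$ then $\operatorname{supp}(w) \subseteq P_\ell$ is impossible for any leaf, since $\abs{P_\ell} \leq h$. Therefore every term in the outer sum is zero and $\hat{F}(w) = 0$, as claimed. The only points requiring care are the bookkeeping facts that the leaves partition $\Z_2^n$ and that each path queries at most $h$ distinct variables; both follow from the tree structure together with the no-repeated-argument convention. Equivalently, one could phrase the whole argument as the standard observation that a height-$h$ decision tree computes a multilinear polynomial of degree at most $h$ in the $\pm 1$ variables $(-1)^{x_i}$, whose degree-$\abs{w}$ monomial coefficient is precisely $\hat{F}(w)$; I expect the leaf-partition version above to be the cleanest route and anticipate no serious obstacle beyond this combinatorial bookkeeping.
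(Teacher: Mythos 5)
Your proof is correct, and it takes a genuinely different route from the paper's. You partition $\Z_2^n$ by the leaves of the tree and evaluate the character sum $\sum_x (-1)^{w\cdot x}$ directly over each subcube $\set{x : x_i = b_i\ \forall i \in P_\ell}$, observing that it vanishes unless $\operatorname{supp}(w) \subseteq P_\ell$, whence $\abs{w} \leq h$ is forced for any nonzero contribution. The paper instead works only with the paths to $1$-leaves: it writes the disjunctive normal form of $f$, passes to the $(\pm 1)$-valued representation $\tilde{F}$, and uses the fact that each input satisfies at most one disjunct to linearize the OR into a sum, obtaining an explicit multilinear polynomial for $\tilde{F}$ of degree at most $\max_i \abs{P_i} \leq h$; comparing monomial coefficients with the Fourier expansion then kills all $\hat{F}(w)$ with $\abs{w} > h$. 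Your subcube argument is more elementary and self-contained---it needs no polynomial representation and no care about linearizing OR under the unique-satisfaction promise, and the bookkeeping you flag (leaves partition the inputs; no variable repeats on a path, so $\abs{P_\ell} \leq h$) is exactly the same convention the paper invokes. What the paper's route buys is the explicit general statement that a height-$h$ tree computes a degree-$\leq h$ multilinear polynomial (the standard fact you mention as an equivalent phrasing at the end), but for this lemma alone your computation is arguably the cleaner path.
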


\begin{proof}
Since the Boolean function $f$ is given by a decision tree, let $\set{P_1, \dotsc, P_m}$ be the set of all paths that start at the root of this tree and end at a parent of a leaf labeled by $1$. For example, $P_1 = \set{x_2, x_1, x_5, x_4, x_{10}}$ and $P_2 = \set{x_2, x_7, x_1}$ are two such paths for the tree shown in \Fig{Tree}. We can write the disjunctive normal form of $f$ as
\begin{equation}
  f(x) = \bigvee_{i=1}^m \bigwedge_{j \in P_i} \bigl( b^{(i)}_j \op x_j \bigr)
  \label{eq:DNF}
\end{equation}
where ``$\vee$'' and ``$\wedge$'' represent logical OR and AND functions, respectively, and $b^{(i)}_j \in \Z_2$ is equal to $1$ if and only if variable $x_j$ has to be negated on path $P_i$. For example, $x_{10}$ is negated on $P_1$, and $x_2$ and $x_7$ are negated on $P_2$.

To prove the desired result about the Fourier coefficients of $f$, we switch from Boolean functions to $(\pm 1)$-valued functions with $(\pm 1)$-valued variables. In particular, we replace $f\colon \Z_2^n \to \Z_2$ by a function $\tilde{F}\colon \set{1,-1}^n \to \set{1,-1}$ in variables $X_i \in \set{1,-1}$ such that
\begin{equation}
  \tilde{F} \bigl( (-1)^x \bigr) = (-1)^{f(x)}
  \label{eq:tilde F}
\end{equation}
for all $x \in \Z_2^n$.

Notice that the $(\pm 1)$-valued versions of logical NOT, AND, and OR functions are given by the following polynomials:
\begin{align}
  \NOT(X) &\defeq - X, \\
  \AND(X_1, \dotsc, X_k) &\defeq \phantom{+} 1 - 2 \prod_{i=1}^k \frac{1 - X_i}{2}, \\
   \OR(X_1, \dotsc, X_k) &\defeq          -  1 - 2 \prod_{i=1}^k \frac{1 + X_i}{2}. \label{eq:OR}
\end{align}
We can use these polynomials and \Eq{DNF} to write $\tilde{F}$ as
\begin{equation}
  \tilde{F}(X) = \OR_{i=1}^m \AND_{j \in P_i} (-1)^{b^{(i)}_j} X_j,
  \label{eq:F DNF}
\end{equation}
where $\OR_{i=1}^m X_i$ stands for $\OR(X_1, \dotsc, X_m)$ and a similar convention is used for AND.

When we determine the value of $f$ using a decision tree, each input $x \in \Z_2^n$ leads to a unique leaf of the tree. Thus, when $f(x) = 1$, there is a unique value of $i$ in \Eq{DNF} for which the corresponding term in the disjunction is satisfied. With this promise we can simplify \Eq{OR} to
\begin{equation}
  \OR(X_1, \dotsc, X_k) \defeq \sum_{i=1}^k (X_i - 1) + 1.
\end{equation}
If we use this in \Eq{F DNF}, we get
\begin{align}
  \tilde{F}(X)
  &= \sum_{i=1}^m \Bigl( \AND_{j \in P_i} (-1)^{b^{(i)}_j} X_j - 1 \Bigr) + 1, \\
  &= 1 - 2 \sum_{i=1}^m \prod_{j \in P_i} \frac{1 - (-1)^{b^{(i)}_j} X_j}{2}. \label{eq:F(X) poly}
\end{align}
Notice that this polynomial has degree at most $\max_i \abs{P_i} \leq h$, the height of the tree. On the other hand, the Fourier transform is self-inverse (see \Sect{QFT and convolution}), so
\begin{equation}
  (-1)^{f(x)}
  = \sqrt{2^n} F(x)
  = \sqrt{2^n} \hat{\hat{F}}(x)
  = \sum_{w \in \Z_2^n} (-1)^{x \cdot w} \hat{F}(w).
\end{equation}
The $(\pm 1)$-valued equivalent of this equation is
\begin{equation}
  \tilde{F}(X)
  = \sum_{w \in \Z_2^n} \hat{F}(w) \prod_{i \colon w_i = 1} X_i.
\end{equation}
By comparing this with \Eq{F(X) poly} we conclude that $\hat{F}(w) = 0$ when $\abs{w} > h$.
\end{proof}

According to this lemma, we can use the following strategy to construct Boolean functions with a large fraction of their Fourier coefficients equal to zero. We pick a random decision tree with many variables but small height, i.e., large $n$ and small $h$ (notice that $n \leq 2^h - 1$). Then we are guaranteed that the fraction of non-zero Fourier coefficients does not exceed
\begin{equation}
  \frac{1}{2^n} \sum_{k=0}^{h} \binom{n}{k}
  \leq \frac{2^{H(\frac{h}{n}) n}}{2^n}
  = \biggl( \frac{1}{2^n} \biggr)^{1-H(\frac{h}{n})}
\end{equation}
where $H(p) \defeq - p \log_2 p - (1-p) \log_2 (1-p)$ is the binary entropy function. In particular, if $h \sim \log_2 n$ then this fraction vanishes as $n$ goes to infinity, i.e., $\hat{F}$ is zero almost everywhere.

However, notice that when the number of zero Fourier coefficients is large, it is also more likely to pick a degenerate Boolean function (i.e., one that has a $b$-shift for some $b \in \Z_2$); we would like to avoid this. Recall from \Lem{b-shifts} that $f$ has a $b$-shift only if all its non-zero Fourier coefficients lie in a coset $\mc{S}_b$ of some $(n-1)$-dimensional subspace $\mc{S} \subset \Z_2^n$. Unfortunately, we do not know the probability that a random decision tree with $n$ variables and height $\log_2 n$ corresponds to a Boolean function with this property.

\subsection{Zeroes in the \texorpdfstring{$t$-fold}{t-fold} Fourier spectrum} \label{sect:Increasing t}

In this section we study the fraction of zeroes in the $t$-fold Fourier spectrum $\FC{t}$ of $f$ as a function of $t$. The main observation is \Lem{Eliminating zeroes}, which shows that unless $f$ has an undetectable shift, $\FC{t}$ becomes non-zero everywhere when $t$ is sufficiently large. This means that even for functions with a high density of zeroes in the Fourier spectrum, one can boost the success probability of the basic quantum rejection sampling approach discussed in \Sect{Basic QRS} by using the $t$-fold generalization from \Sect{t-fold QRS}.

\begin{proposition}\label{prop:expansion}
Let $S_t \defeq \set{w \in \Z_2^n \colon \FC{t}(w) \neq 0}$ be the set of strings for which $\FC{t}$ is non-zero. Then $S_{t+1} = S_t + S_1$ where $A + B \defeq \set{a+b \colon a \in A, b \in B}$.
\end{proposition}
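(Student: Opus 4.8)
$S_{t+1} = S_t + S_1$, where $S_t = \{w : \FC{t}(w) \neq 0\}$.

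The plan is to translate the defining relation of the $t$-fold Fourier coefficient into a statement about supports of convolved non-negative functions. Recall from \Def{Ft} that $[\FC{t}(w)]^2 = [\hat F^2]^{*t}(w)$, and note the basic factorization $[\hat F^2]^{*(t+1)} = [\hat F^2]^{*t} * \hat F^2$ (associativity of convolution, from the Fact in \Sect{QFT and convolution}). Writing $g_t(w) := [\hat F^2]^{*t}(w)$, we have $g_t(w) \geq 0$ for every $w$, since each factor $\hat F(y)^2$ is non-negative and convolution of non-negative functions is non-negative; moreover $S_t = \{w : g_t(w) \neq 0\} = \{w : g_t(w) > 0\} = \operatorname{supp} g_t$. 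Thus the whole proposition reduces to the support identity
\begin{equation}
  \operatorname{supp}(g_t * g_1) = \operatorname{supp}(g_t) + \operatorname{supp}(g_1),
  \label{eq:support-plan}
\end{equation}
where here $g_1 = \hat F^2$ and $A + B$ is the sumset as in the statement.

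The key point that makes \EqRef{support-plan} hold is non-negativity: there is no cancellation. Writing out the convolution over $\Z_2^n$,
\begin{equation*}
  (g_t * g_1)(w) = \sum_{y \in \Z_2^n} g_t(y)\, g_1(w - y),
\end{equation*}
every summand is $\geq 0$, so $(g_t * g_1)(w) > 0$ if and only if \emph{some} summand is strictly positive, i.e.\ there exists $y$ with $g_t(y) > 0$ and $g_1(w-y) > 0$. That is exactly the condition $y \in S_t$ and $w - y \in S_1$, equivalently $w = y + (w-y) \in S_t + S_1$ (using that $-z = z$ in $\Z_2^n$). This gives both inclusions of \EqRef{support-plan} at once: the forward direction exhibits a witness $y$, and the backward direction observes that any decomposition $w = a + b$ with $a \in S_t$, $b \in S_1$ contributes a strictly positive term $g_t(a)\,g_1(b)$, which cannot be canceled. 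Taking square roots, $g_{t+1}(w) > 0 \iff \FC{t+1}(w) \neq 0$, so $S_{t+1} = S_t + S_1$ as claimed.

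I expect no serious obstacle here; the entire argument rests on the single observation that the relevant functions are pointwise non-negative, so supports add under convolution. The only point requiring a word of care is the identification $S_t = \operatorname{supp}(g_t)$ together with the convention (introduced before \Eq{Es}) of omitting vanishing terms, and the harmless fact that negation is trivial on $\Z_2^n$; neither causes any difficulty.
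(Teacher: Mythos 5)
Your proof is correct and takes essentially the same route as the paper's: both rest on the factorization $\bigl[\FC{t+1}\bigr]^2 = \bigl[\FC{t}\bigr]^2 * \bigl[\FC{1}\bigr]^2$ and the observation that a convolution of pointwise non-negative functions over $\Z_2^n$ is strictly positive exactly when some summand is, so supports add without cancellation. Your $\operatorname{supp}(g_t * g_1) = \operatorname{supp}(g_t) + \operatorname{supp}(g_1)$ formulation is just a slightly more abstract phrasing of the paper's two-inclusion argument.
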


\begin{proof}
Note that $\bigl[ \FC{t+1} \bigr]^2 \! = \bigl[ \FC{t} \bigr]^2 \! * \bigl[ \FC{1} \bigr]^2$ from \Def{Ft}. Also, $\FC{t}(w) \geq 0$ for any $t \geq 1$ and $w \in \Z_2^n$. Assume that $w_0 \in S_t$ and $w_1 \in S_1$. Then $\FC{t}(w_0) > 0$ and $\FC{1}(w_1) > 0$, so
\begin{align}
  \bigl[ \FC{t+1} \bigr]^2 (w_0 + w_1)
&    = \sum_{x \in \Z_2^n}
       \bigl[ \FC{t} \bigr]^2 (x) \cdot
       \bigl[ \FC{1} \bigr]^2 (w_0 + w_1 - x) \label{eq:Sum} \\
& \geq \bigl[ \FC{t} \bigr]^2 (w_0) \cdot
       \bigl[ \FC{1} \bigr]^2 (w_0 + w_1 - w_0) > 0.
\end{align}
Thus $w_0 + w_1 \in S_{t+1}$ and hence $S_t + S_1 \subseteq S_{t+1}$. Conversely, if $w$ cannot be written in the form $w_0 + w_1$ for some $w_0 \in S_t$ and $w_1 \in S_1$ then $\FC{t+1}(w) = 0$, since all terms of the sum in \Eq{Sum} vanish.
\end{proof}

\begin{lemma}\label{lem:Eliminating zeroes}
If $f\colon \Z_2^n \to \Z_2$ does not have an undetectable shift, then there exists $t \in \set{1, \dotsc, n}$ such that $\FC{t}$ is non-zero everywhere.
\end{lemma}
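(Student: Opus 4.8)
The plan is to use the structure established by \Prop{expansion}, namely $S_{t+1} = S_t + S_1$, together with the no-undetectable-shift hypothesis.

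First I would identify what the hypothesis buys us. By \Lem{b-shifts} (with $b=0$), $f$ having a nonzero undetectable shift is equivalent to all its nonzero Fourier coefficients lying in a single $(n-1)$-dimensional subspace. Since $\FC{1}(w) = \abs{\hat{F}(w)}$, the set $S_1 = \set{w \colon \hat{F}(w) \neq 0}$ is exactly the support of the Fourier spectrum. The contrapositive of \Lem{b-shifts} therefore says: if $f$ has no nonzero undetectable shift, then $S_1$ is \emph{not} contained in any $(n-1)$-dimensional subspace, i.e., $S_1$ spans all of $\Z_2^n$. (The trivial shift $0$ is always undetectable but harmless; the lemma hypothesis rules out nontrivial ones, which is what we need. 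Also $0 \in S_1$ always, since $\hat F(0) = (2^n - 2\abs{f})/2^n$, unless $f$ is balanced — I should handle the balanced case, but even then $S_1$ spanning is the key property and $0 \in S_1 + S_1$ comes for free.)

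Next I would iterate \Prop{expansion} to get $S_t = S_1 + S_1 + \dotsb + S_1$ ($t$ copies, i.e.\ the $t$-fold sumset of $S_1$). The claim reduces to: if $S_1$ spans $\Z_2^n$ as a vector space over $\Z_2$, then its $n$-fold sumset is all of $\Z_2^n$. To see this, pick a basis $v_1, \dotsc, v_n$ of $\Z_2^n$ drawn from $S_1$. Since $0 \in S_1$ (I can argue $0 \in S_1$ either directly or by noting $S_2 \supseteq \set{v+v} = \set{0}$ and folding it in), any element $w = \sum_{i \in T} v_i$ for $T \subseteq \set{1,\dotsc,n}$ lies in the sum of $\abs{T} \leq n$ basis vectors from $S_1$, padded out with $0$'s to exactly $n$ summands. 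Hence $w \in S_n$, so $S_n = \Z_2^n$, meaning $\FC{n}$ is nonzero everywhere. Thus some $t \leq n$ works.

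I expect the main obstacle to be the bookkeeping around the element $0$ and the balanced case. The sumset argument needs every element expressible as a sum of \emph{exactly} $t$ elements of $S_1$ for a fixed $t$; padding with copies of $0$ requires $0 \in S_1$, which holds iff $f$ is not balanced. If $f$ is balanced ($\hat F(0) = 0$), I would instead observe that once a spanning set is present, the sumset stabilizes: because $S_1$ spans, for large enough $t$ the sumset $S_t$ becomes translation-closed and equals a coset of the group generated by $S_1$, which is all of $\Z_2^n$; concretely $S_{t} \subseteq S_{t+2}$ always holds (add $v+v=0$ via two summands from $S_1$), so the sumsets are eventually all of $\Z_2^n$ within $n$ steps by a careful parity count. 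Verifying that $t \leq n$ suffices in this balanced subcase is the only genuinely delicate point, and I would resolve it by the dimension-growth argument: the $\Z_2$-span of $S_t$ (equivalently the affine span, tracked carefully) strictly increases until it fills $\Z_2^n$, and it can increase at most $n$ times.
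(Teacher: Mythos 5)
Your unbalanced case is correct and is in fact more careful than the paper's own proof: you correctly identified that \Prop{expansion} gives the \emph{exact} $t$-fold sumset $S_t = S_1 + \dotsb + S_1$, so padding with $0 \in S_1$ is essential to the basis argument. (The paper glosses over exactly this point; its parenthetical ``say, if $S_1$ is the standard basis'' is actually a counterexample to the claim it is meant to illustrate, since exact $t$-fold sums of standard basis vectors all have Hamming weight congruent to $t \bmod 2$ and hence never exhaust $\Z_2^n$.)

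However, your balanced case contains a genuine gap, and it cannot be closed under the stated hypothesis. The assertion that once $S_1$ spans, the sumsets ``become translation-closed'' and eventually equal $\Z_2^n$ fails whenever $S_1$ is contained in an affine hyperplane not through the origin: if $S_1 \subseteq \set{w \colon w \cdot \bv{s} = 1}$ for some fixed $\bv{s} \neq 0$, then $S_t \subseteq \set{w \colon w \cdot \bv{s} = t \bmod 2}$ for every $t$, so the sumsets alternate between two cosets forever even though $S_1$ spans $\Z_2^n$ linearly; your dimension-growth fallback fails for the same reason, since the linear span of $S_t$ can be all of $\Z_2^n$ while $S_t$ itself stays inside an affine hyperplane. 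By \Lem{b-shifts} with $b = 1$, this configuration occurs exactly when $f$ has an anti-shift, which the lemma's hypothesis does \emph{not} exclude. Concretely, $f(x_1,x_2,x_3) = x_1 x_2 \op x_3$ has no nonzero undetectable shift (it is non-constant in every shift direction), is balanced, and has anti-shift $(0,0,1)$; its spectrum is $\hat{F}(w) = \frac{1}{2}(-1)^{w_1 w_2} \delta_{w_3,1}$, so $\FC{t}$ vanishes on the entire coset $\set{w \colon w_3 = t + 1 \bmod 2}$ for every $t$. In other words, the delicate point you flagged is a real defect of the statement itself: the hypothesis must be strengthened to non-degeneracy (no $b$-shift for either $b \in \Z_2$, the assumption the paper adopts without loss of generality in \Sect{b-shifts}). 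Under that stronger hypothesis $S_1$ lies in no linear and no affine hyperplane, $0 \in S_2$, and a repaired induction shows the sumsets eventually cover $\Z_2^n$, though establishing the specific bound $t \leq n$ then still requires an argument that neither your proposal nor the paper's one-line proof supplies.
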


\begin{proof}
If $S_1$ spans the whole space $\Z_2^n$, we can inductively apply \Prop{expansion} to conclude that $S_t = \Z_2^n$ for some sufficiently large $t$. In particular, it suffices to take $t \leq n$ (say, if $S_1$ is the standard basis). On the other hand, if $S_1$ spans only a proper subspace of $\Z_2^n$, then it is contained in some $(n-1)$-dimensional subspace $\mc{S}_0$. Since $\FC{1} = \abs{\hat{F}}$ vanishes outside of $\mc{S}_0$, we conclude by \Lem{b-shifts} that $f$ has an undetectable shift.
\end{proof}


\newcommand{\STOC}[2]{Proceedings of the #1 Annual ACM Symposium on Theory of Computing (STOC #2)}
\newcommand{\FOCS}[2]{Proceedings of the #1 Annual Symposium on Foundations of Computer Science (FOCS #2)}
\newcommand{\SODA}[2]{Proceedings of the #1 ACM-SIAM Symposium on Discrete Algorithms (SODA #2)}


\end{document}